\newtheorem{claim}{Claim}
\title{Combining Rewriting and Incremental Materialisation Maintenance \\ for Datalog Programs with Equality}
\author{
   Boris Motik, Yavor Nenov, Robert Piro and Ian Horrocks\\
   Department of Computer Science, Oxford University\\
   Oxford, United Kingdom\\
   firstname.lastname@cs.ox.ac.uk
}
\begin{document}

\maketitle

\begin{abstract}
\emph{Materialisation} precomputes all consequences of a set of facts and a
datalog program so that queries can be evaluated directly (i.e., independently
from the program). \emph{Rewriting} optimises materialisation for datalog
programs with equality by replacing all equal constants with a single
representative; and \emph{incremental maintenance} algorithms can efficiently
update a materialisation for small changes in the input facts. Both techniques
are critical to practical applicability of datalog systems; however, we are
unaware of an approach that combines rewriting and incremental maintenance. In
this paper we present the first such combination, and we show empirically that
it can speed up updates by several orders of magnitude compared to using either
rewriting or incremental maintenance in isolation.

\end{abstract}

\section{Introduction}\label{sec:introduction}

\emph{Datalog} \cite{abiteboul95foundation} is a declarative, rule-based
language that can describe (possibly recursive) data dependencies. It is widely
used in applications as diverse as enterprise data management
\cite{DBLP:conf/iclp/Aref10} and query answering over ontologies in the OWL 2
RL profile \cite{owl2-profiles} extended with SWRL rules \cite{SWRL}.

Querying the set $\Pi^\infty(E)$ of consequences of a set of \emph{explicit}
facts $E$ and a datalog program $\Pi$ is a key service in datalog systems. It
can be supported by precomputing and storing $\Pi^\infty(E)$ so that queries
can be evaluated directly, without further reference to $\Pi$. Set
$\Pi^\infty(E)$ and the process of computing it are called the
\emph{materialisation} of $E$ w.r.t.\ $\Pi$. This technique is used in the
state of the art systems such as Olwgres \cite{DBLP:conf/owled/StockerS08},
WebPIE \cite{DBLP:journals/ws/UrbaniKMHB12}, Oracle's RDF store
\cite{DBLP:conf/icde/WuEDCKAS08}, GraphDB (formerly OWLIM)
\cite{DBLP:journals/semweb/BishopKOPTV11}, and RDFox
\cite{mnpho14parallel-materialisation-RDFox}.

Although datalog traditionally employs the unique name assumption (UNA), in
some applications uniqueness of identifiers cannot be guaranteed. For example,
due to the distribution and the independence of data sources, in the Semantic
Web different identifies are often used to refer to the same domain object.
Handling such use cases requires an extension of datalog without UNA, in which
one can infer equalities between constants using a special \emph{equality}
predicate $\sameAs$ that can occur in facts and rule heads. The semantics of
$\sameAs$ can be captured explicitly using rules that \emph{axiomatise}
$\sameAs$ as a congruence relation; however, this is known to be inefficient
when equality is used extensively. Therefore, systems commonly use
\emph{rewriting}
\cite{baader98term,NieuwenhuisRubio:HandbookAR:paramodulation:2001}---an
optimisation where equal constants are replaced with a canonical
\emph{representative}, and only facts containing such representatives are
stored. The benefits of rewriting have been well-documented in practice
\cite{DBLP:conf/icde/WuEDCKAS08,DBLP:journals/ws/UrbaniKMHB12,DBLP:journals/semweb/BishopKOPTV11,mnph15owl-sameAs-rewriting}.

Moreover, datalog applications often need to handle continuous updates to the
set of explicit facts $E$. \emph{Rematerialisation} (i.e., computing the
materialisation from scratch) is often very costly, so \emph{incremental
maintenance} algorithms are often used in practice. Adding facts to $E$ is
trivial as one can simply continue from where the initial materialisation has
finished; hence, given a materialisation $\Pi^\infty(E)$ of $E$ w.r.t.\ $\Pi$
and a set of facts $E^-$, the main challenge for an incremental algorithm is to
efficiently compute ${\Pi^\infty(E \setminus E^-)}$. Several such algorithms
have already been proposed. \emph{Truth maintenance systems}
\cite{DBLP:journals/ai/Doyle79,DBLP:journals/ai/Kleer86,DBLP:conf/edbt/GoasdoueMR13}
track dependencies between facts to efficiently determine whether a fact has a
derivation from ${E \setminus E^-}$, so only facts for which no such
derivations exist are deleted. Such approaches, however, store large amounts of
auxiliary information and are thus often unsuitable for data-intensive
applications. \emph{Counting}
\cite{DBLP:conf/ifip/NicolasY83,DBLP:conf/sigmod/GuptaMS93,DBLP:conf/semweb/UrbaniMJHB13,DBLP:conf/edbt/GoasdoueMR13}
stores with each fact ${F \in \Pi^\infty(E)}$ the number of times $F$ has been
derived during initial materialisation, and this number is used to determine
when to delete $F$; however, in its basic form counting works only with
nonrecursive rules, and a proposed extension to recursive rules requires
multiple counts per fact \cite{DBLP:journals/jiis/DewanOSWS92}, which can be
costly. The \emph{Delete/Rederive} (DRed) algorithm
\cite{DBLP:conf/sigmod/GuptaMS93} handles recursive rules with no storage
overhead: to delete $E^-$ from $E$, the algorithm first overdeletes all
consequences of $E^-$ in $\Pi^\infty(E)$ and then rederives all facts provable
from ${E \setminus E^-}$. The \emph{Backward/Forward} (\BF) algorithm combines
backward and forward chaining in a way that outperforms DRed on inputs where
facts have many alternative derivations---a common scenario in Semantic Web
applications \cite{mnph15incremental-BF}.

Combining rewriting and incremental maintenance is difficult due to complex
interactions between the two techniques: removing $E^-$ from $E$ may entail
retracting equalities, which may (partially) invalidate the rewriting and
require the restoration of rewritten facts (see Section~\ref{sec:incremental}).
To the best of our knowledge, such a combination has not been considered in the
literature, and practical systems either use rewriting with rematerialisation,
or axiomatise equality and use incremental maintenance; in either case they
give up a technique known to be critical for performance. In this paper we
present the \BFeq algorithm, which combines rewriting with \BF: given a set of
facts $E^-$, our algorithm efficiently updates the materialisation of $E$
w.r.t.\ $\Pi$ computed using the rewriting approach by
\citeA{mnph15owl-sameAs-rewriting}. Extensions of datalog with equality are
nowadays used mainly for querying RDF data extended with OWL 2 RL ontologies
and SWRL rules, so we formalise our algorithm in the framework of RDF; however,
our approach can easily be adapted to general datalog.

We have implemented \BFeq in the open-source RDFox
system\footnote{\url{http://www.cs.ox.ac.uk/isg/tools/RDFox/}} and have
evaluated it on several real-world and synthetic datasets. Our results show
that the algorithm indeed combines the best of both worlds, as it is often
several orders of magnitude faster than either rematerialisation with
rewriting, or \BF with axiomatised equality.

\section{Preliminaries}\label{sec:preliminaries}

\textbf{Datalog.} A \emph{term} is a \emph{constant} (\duri{a}, \duri{b},
\duri{A}, \duri{R}, etc.) or a variable ($x$, $y$, $z$, etc.). An \emph{(RDF)
atom} has the form \triple{t_1}{t_2}{t_3}, where $t_1,t_2,t_3$ are terms; an
\emph{(RDF) fact} (also called a \emph{triple}) is a variable-free RDF atom;
and a \emph{dataset} is a finite set of facts. A \emph{(datalog) rule} $r$ is
an implication of the form \eqref{eq:rule}, where ${H, B_1, \dots, B_n}$ are
atoms and each variable occurring in $H$ also occurs in some $B_i$; $\head{r}
\defeq H$ is the \emph{head atom} of $r$; each $ B_i$ is a \emph{body atom} of
$r$; and $\body{r}$ is the set of all body atoms of $r$. A \emph{(datalog)
program} is a finite set of rules.
\begin{align}
    H \leftarrow B_1 \land \dots \land B_n \label{eq:rule}
\end{align}
A \emph{substitution} is a partial mapping of variables to terms. For $\alpha$
a term, atom, rule, or a set of these, $\voc{\alpha}$ is the set of all
constants in $\alpha$, and $\alpha\sigma$ is the result of applying a
substitution $\sigma$ to $\alpha$. The \emph{materialisation} $\Pi^\infty(E)$
of a dataset $E$ w.r.t.\ a program $\Pi$ is the smallest superset of $E$
containing $\head{r}\sigma$ for each rule ${r \in \Pi}$ and substitution
$\sigma$ with ${\body{r}\sigma \subseteq \Pi^\infty(E)}$.

\interspacing

\noindent\textbf{Equality.} The constant \owl{sameAs} (abbreviated $\sameAs$)
can be used to encode equality between constants. For example, fact
${\triple{\duri{P.\_Smith}}{\sameAs}{\duri{Peter\_Smith}}}$ states that
$\duri{P.\_Smith}$ and $\duri{Peter\_Smith}$ are one and the same object. Facts
of the form $\triple{s}{\sameAs}{t}$ are called \emph{equalities} and, for
readability, are abbreviated as ${s \sameAs t}$; note that ${{\sameAs} \in
\voc{s \sameAs t}}$. Program $\PsameAs$ consisting of rules
\eqref{eq:eq-repl1}--\eqref{eq:eq-ref} axiomatises $\sameAs$ as a congruence
relation. If a program $\Pi$ or a dataset $E$ contain $\sameAs$, systems then
answer queries in the materialisation of $E$ w.r.t.\ ${\Pi \cup \PsameAs}$.
\begin{align}
    \triple{x_1'}{x_2}{x_3} & \leftarrow \triple{x_1}{x_2}{x_3} \wedge x_1 \sameAs x_1'     \tag{$\approx_1$}\label{eq:eq-repl1} \\
    \triple{x_1}{x_2'}{x_3} & \leftarrow \triple{x_1}{x_2}{x_3} \wedge x_2 \sameAs x_2'     \tag{$\approx_2$}\label{eq:eq-repl2} \\
    \triple{x_1}{x_2}{x_3'} & \leftarrow \triple{x_1}{x_2}{x_3} \wedge x_3 \sameAs x_3'     \tag{$\approx_3$}\label{eq:eq-repl3} \\
    x_i \sameAs x_i         & \leftarrow \triple{x_1}{x_2}{x_3}\text{, for }1\leq i\leq 3   \tag{$\approx_4$}\label{eq:eq-ref}
\end{align}

\interspacing

\noindent\textbf{Rewriting} is a well-known optimisation of this approach. For
$\pi$ a mapping of constants to constants and $\alpha$ a constant, fact, rule,
dataset, or substitution, $\pi(\alpha)$ is the result of replacing each
constant $c$ in $\alpha$ with $\pi(c)$; such $\alpha$ is \emph{normal} w.r.t.\
$\pi$ if ${\pi(\alpha) = \alpha}$; and $\pi(\alpha)$ is the
\emph{representative} of $\alpha$ in $\pi$. For $c$ a constant, let ${c^\pi
\defeq \{ d \mid \pi(d) = c \}}$. For $U$ a dataset, let ${U^\pi \defeq \{
\triple{s}{p}{o} \mid \triple{\pi(s)}{\pi(p)}{\pi(o)} \in U \}}$; and, for $F$
a fact, let ${F^\pi \defeq \{ F \}^\pi}$. We assume that all constant are
totally ordered such that $\sameAs$ is the smallest constant; then, for $S$ a
nonempty set of constants, $\min S$ (resp.\ $\max S$) is the smallest (resp.\
greatest) element of $S$. Let $U$ be a dataset and let ${\eqclass{U}{c} \defeq
\{ c \} \cup \{ d \mid c \sameAs d \in U \}}$; then, the \emph{rewriting} of
$U$ is the pair $(\pi,I)$ such that
\begin{enumerate}
    \item ${\pi(c) = \min \eqclass{U}{c}}$ for each constant $c$, and

    \item ${I = \pi(U)}$.
\end{enumerate}
Note that ${\pi(\sameAs) = {\sameAs}}$, that the rewriting is unique for $U$,
and that ${\PsameAs^\infty(U) = U}$ implies ${I^\pi = U}$. The
\emph{r-materialisation} of a dataset $E$ w.r.t.\ a program $\Pi$ is the
rewriting $(\pi,I)$ of the dataset ${J = (\Pi \cup \PsameAs)^\infty(E)}$.
\citeA{mnph15owl-sameAs-rewriting} show how to answer queries over $J$ by
materialising $(\pi,I)$ instead of $J$.

\section{Updating R-Materialisation Incrementally}\label{sec:incremental}

Let $E$ and $E^-$ be datasets, let ${E' = E \setminus E^-}$, and let $\Pi$ be a
program. Moreover, let $J$ (resp.\ $J'$) be the materialisation of $E$ (resp.\
$E'$) w.r.t.\ ${\Pi \cup \PsameAs}$, and let $(\pi,I)$ (resp.\ $(\pi',I')$) be
the r-materialisation of $E$ (resp.\ $E'$) w.r.t.\ $\Pi$. Given $(\pi,I)$,
$\Pi$, and $E^-$, the \BFeq algorithm computes $(\pi',I')$ efficiently by
combining the \BF algorithm by \citeA{mnph15incremental-BF} for incremental
maintenance in datalog without equality with the r-materialisation algorithm by
\citeA{mnph15owl-sameAs-rewriting}. We discuss the intuition in
Section~\ref{sec:incremental:intuition} and some optimisations in
Section~\ref{sec:incremental:optimisations}, and we formalise the algorithm in
Section \ref{sec:incremental:formalisation}.

\subsection{Intuition}\label{sec:incremental:intuition}

\noindent\textbf{Main Difficulty.} An update may lead to the deletion of
equalities, which may require \emph{adding} facts to $I$. The following example
program $\Pi$ and dataset $E$ exhibit such behaviour.
\begin{displaymath}
\begin{array}{@{}r@{\,}c@{\,}l@{}}
    \Pi     & = \{  & y_1 \sameAs y_2 \leftarrow \triple{y_1}{\duri{R}}{x} \land \triple{y_2}{\duri{R}}{x}, \\
            &       & y_1 \sameAs y_2 \leftarrow \triple{x}{\duri{R}}{y_1} \land \triple{x}{\duri{R}}{y_2} \,
                \} \\[1ex]
    E       & = \{  & \triple{\duri{a}}{\duri{R}}{\duri{b}}, \,
                      \triple{\duri{c}}{\duri{R}}{\duri{d}}, \,
                      \triple{\duri{a}}{\duri{R}}{\duri{d}}
                \} \\
    I       & = \{  & \triple{\duri{a}}{\duri{R}}{\duri{b}}, \,
                      \duri{a} \sameAs \duri{a}, \,
                      \duri{R} \sameAs \duri{R}, \,
                      \duri{b} \sameAs \duri{b}, \,
                      {\sameAs} \sameAs {\sameAs} \,
                \} \\
    \pi     & = \{  & \duri{a} \mapsto \duri{a}, \,
                      \duri{b} \mapsto \duri{b}, \,
                      \duri{c} \mapsto \duri{a}, \,
                      \duri{d} \mapsto \duri{b}, \,
                      \duri{R} \mapsto \duri{R}, \,
                      {\sameAs} \mapsto {\sameAs} \,
                \} \\[1ex]
    E^-     & = \{  & \triple{\duri{a}}{\duri{R}}{\duri{d}} \,
                \} \\
    I'      & = \{  & \triple{\duri{a}}{\duri{R}}{\duri{b}}, \,
                      \duri{a} \sameAs \duri{a}, \,
                      \duri{R} \sameAs \duri{R}, \,
                      \duri{b} \sameAs \duri{b}, \,
                      {\sameAs} \sameAs {\sameAs}, \\
            &       & \triple{\duri{c}}{\duri{R}}{\duri{d}}, \,
                      \duri{c} \sameAs \duri{c}, \,
                      \duri{d} \sameAs \duri{d} \,
                \} \\
    \pi'    & = \{  & \duri{a} \mapsto \duri{a}, \,
                      \duri{b} \mapsto \duri{b}, \,
                      \duri{c} \mapsto \duri{c}, \,
                      \duri{d} \mapsto \duri{d}, \,
                      \duri{R} \mapsto \duri{R}, \,
                      {\sameAs} \mapsto {\sameAs} \,
                \} \\
\end{array}
\end{displaymath}
Relation $\duri{R}$ is bijective in $\Pi$, so ${\duri{a} \sameAs \duri{c} \in
J}$ as both $\duri{a}$ and $\duri{c}$ have outgoing $\duri{R}$-edges to
$\duri{d}$, and ${\duri{b} \sameAs \duri{d} \in J}$ as both $\duri{b}$ and
$\duri{d}$ have incoming $\duri{R}$-edges from $\duri{a}$. By rewriting, we
represent each fact $\triple{\alpha}{\duri{R}}{\beta}$ from $J$ using a single
fact $\triple{\duri{a}}{\duri{R}}{\duri{b}}$, and analogously for facts
involving $\sameAs$; thus, instead of 14 facts, we store just five facts.
Assume now that we remove $E^-$ from $E$. In $J$ and $J'$ we ascribe no
particular meaning to $\sameAs$, so the monotonicity of datalog ensures ${J
\subseteq J'}$; thus, the \BF algorithm just needs to delete facts that no
longer hold. However, ${\duri{a} \sameAs \duri{c} \not\in J'}$ and ${\duri{b}
\sameAs \duri{d} \not\in J'}$, so we must update $\pi$ and extend $I$ with the
facts from $J'$ that are not represented via $\pi'$. Thus, in our example, $I'$
actually \emph{contains} $I$.

\interspacing

\noindent\textbf{Solution Overview.} \BFeq consists of
Algorithms~\ref{alg:BF-sameAs}--\ref{alg:prove} that follow the same basic idea
as \BF; to highlight the differences, lines that exist in \BF in a modified
form are marked with `$\modsym$', and new lines and algorithms are marked
with`$\newsym$'.

We initially mark all facts in $\pi(E^-)$ as `doubtful'---that is, we indicate
that their truth might change. Next, for each `doubtful' fact $F$, we determine
whether $F$ is provable from $E'$ and, if not, we identify the immediate
consequences of $F$ (i.e., the facts in $I$ that can be derived using $F$) and
mark them as `doubtful'; we know exactly which facts have changed after
processing all `doubtful' facts. To check the provability of $F$, we use
backward chaining to identify the facts in $I$ that can prove $F$, and we use
forward chaining to actually prove $F$. The latter process also identifies the
necessary changes to $\pi$ and $I$, which we apply to $(\pi,I)$ in a final
step. We next describe the components of \BFeq in more detail.

\interspacing

\noindent\textbf{Procedure $\mathsf{saturate}()$} is given a dataset ${C
\subseteq I}$ of \emph{checked} facts, and it computes the set $L$ containing
each fact $F$ derivable from $E'$ such that each fact in a derivation of $F$ is
contained in $C^\pi$; thus, $C$ identifies the part of $J'$ to recompute.
Rather than storing $L$ directly, we adapt the r-materialisation algorithm by
\citeA{mnph15owl-sameAs-rewriting} and represent $L$ by its rewriting
${(\gamma,P \setminus \merged{P})}$; the role of the two sets $P$ and
$\merged{P}$ is discussed shortly.
Lines~\ref{alg:saturate:C}--\ref{alg:saturate:C:expand-F-derive} compute the
facts in $L$ derivable immediately from $E'$: we iterate over each ${F \in C}$
and each ${G \in F^\pi}$; since we represent $L$ by its rewriting, we add
$\gamma(G)$ to $P$. The roles of set $Y$ and
lines~\ref{alg:saturate:C:if-ref}--\ref{alg:saturate:C:if-ref:derive} will be
discussed shortly. Lines~\ref{alg:saturate:F}--\ref{alg:saturate:F:derive}
compute the facts in $L$ derivable using rules: we consider each fact $F$ in
${P \setminus \merged{P}}$
(lines~\ref{alg:saturate:F}--\ref{alg:saturate:add}), each rule $r$, and each
match $\sigma$ of $F$ to a body atom of $r$
(line~\ref{alg:saturate:P:rules:1}), we evaluate the remaining body atoms of
$r$ (line~\ref{alg:saturate:P:rules:2}), and we derive $\gamma(\head{r}\tau)$
for each match $\tau$ (line~\ref{alg:saturate:F:derive}). This basic idea is
slightly more complicated by rewriting: if ${F = a \sameAs b}$, we modify
$\gamma$ so that one constant becomes the representative of the other one
(line~\ref{alg:saturate:RewC}). As a consequence, facts can become `outdated'
w.r.t.\ $\gamma$, so we keep track of such facts using $\merged{P}$: if $F$ is
`outdated', we add $F$ to $\merged{P}$ and $\gamma(F)$ to $P$
(line~\ref{alg:saturate:RewF}); due to the latter, ${P \setminus \merged{P}}$
eventually contains all `up to date' facts. Finally, we apply the reflexivity
rules \eqref{eq:eq-ref} to $F$ (line~\ref{alg:saturate:P:ref}).

Procedure $\mathsf{saturate}()$ is repeatedly called in \BFeq. Set $C$,
however, never shrinks between successive calls, so set $L$ never shrinks
either; hence, at each call we can just continue the computation instead of
starting `from scratch'. A minor problem arises if we derive a fact $F$ with
${F \not\in C^\pi}$ and so we do not add $\gamma(F)$ to $P$, but $C$ is later
extended so that ${F \in C^\pi}$ holds. We handle this by maintaining a set $Y$
of `delayed' facts: in line~\ref{alg:prove:add} we add $F$ to $Y$ if ${F
\not\in C^\pi}$; and in line~\ref{alg:saturate:C:expand-F-derive} we identify
each `delayed' fact ${G \in C^\pi \cap Y}$ and add $\gamma(G)$ to $P$.

\interspacing

\noindent\textbf{Procedure $\mathsf{rewrite}(a,b)$} implements rewriting: we
update $\gamma$ (line~\ref{alg:rewrite:merge}), apply the replacement rules
\eqref{eq:eq-repl1}--\eqref{eq:eq-repl3} to already processed facts containing
`outdated' constants (line~\ref{alg:rewrite:normF}), ensure that $\Gamma$ is
normal w.r.t.\ $\gamma$ (line~\ref{alg:rewrite:normR}), and reapply the
normalised rules (lines \ref{alg:rewrite:reapply}--\ref{alg:rewrite:prove}).
\citeA{mnph15owl-sameAs-rewriting} discuss in detail the issues related to rule
updating and reevaluation.

\interspacing

\noindent\textbf{Procedure $\mathsf{checkProvability()}$} takes a fact ${F \in
I}$ and ensures that, for each ${G \in F^\pi}$, we have ${G \in J'}$ iff
${\gamma(G) \in P \setminus \merged{P}}$---that is, we know the correct status
of each fact that $F$ represents. To this end, we add $F$ to $C$
(line~\ref{alg:checkProvability:add-F}) and thus ensure that $({\gamma,P
\setminus \merged{P}})$ correctly represents $L$
(line~\ref{alg:checkProvability:saturate}). Each fact is added to $C$ only
once, which guarantees termination of the recursion. We then use backward
chaining to examine facts occurring in proofs of $F$ and recursively check
their provability; we stop at any point during that process if all facts in
$F^\pi$ become provable (lines~\ref{alg:checkProvability:allProved},
\ref{alg:checkProvability:ref:allProved},
\ref{alg:checkProvability:repl:allProved},
and~\ref{alg:checkProvability:rules:allProved}). Lines
\ref{alg:checkProvability:ref}--\ref{alg:checkProvability:allProved} handle the
reflexivity rules \eqref{eq:eq-ref}: to check provability of ${c \sameAs c}$,
we recursively check the provability each fact containing $c$. Lines
\ref{alg:checkProvability:repl}--\ref{alg:checkProvability:repl:allProved}
handle replacement rules \eqref{eq:eq-repl1}--\eqref{eq:eq-repl3}: we
recursively check the provability of ${c \sameAs c}$ for each constant $c$
occurring in $F$. Finally,
lines~\ref{alg:checkProvability:rules:1}--\ref{alg:checkProvability:rules:allProved}
handle the rules in $\pi(\Pi)$: we consider each rule ${r \in \pi(\Pi)}$ whose
head matches $F$ and each substitution $\tau$ that matches the body of $r$ in
$I$, and we recursively check the provability of $\body{r}\tau$.

\interspacing

\noindent\textbf{Procedure} $\mathsf{BF^\sameAs()}$ computes the set ${D
\subseteq I}$ of `doubtful' facts. After initialising $D$ to $\pi(E^-)$
(lines~\ref{alg:BF-sameAs:update-E-D:1}--\ref{alg:BF-sameAs:update-E-D:2}), we
consider each fact ${F \in D}$
(lines~\ref{alg:BF-sameAs:extract-F}--\ref{alg:BF-sameAs:rules:processed}) and
determine whether some ${G \in F^\pi}$ is no longer provable
(line~\ref{alg:BF-sameAs:checkProvability}); if so, we add to $D$ all facts
that might be affected by the deletion of $G$. Lines
\ref{alg:BF-sameAs:repl}--\ref{alg:BF-sameAs:repl:add-G} handle rules
\eqref{eq:eq-repl1}--\eqref{eq:eq-repl3}; line \ref{alg:BF-sameAs:ref} handles
rules \eqref{eq:eq-ref}; and lines
\ref{alg:BF-sameAs:rules:1}--\ref{alg:BF-sameAs:rules:derive} handle
$\pi(\Pi)$: we identify each rule ${r \in \pi(\Pi)}$ where $F$ matches a body
atom of $r$, we evaluate the remaining body atoms of $r$ in $I$, and we add
$\head{r}\tau$ to $D$ for each $\tau$ such that ${\body{r}\tau \subseteq I}$.
Once $D$ is processed, $(\gamma, P\setminus\merged{P})$ reflects the changes to
$(\pi,I)$, which we exploit in Algorithm~\ref{alg:propagateChanges}.

\subsection{Optimisations}\label{sec:incremental:optimisations}

\noindent\textbf{Reflexivity.} Facts of the form ${F = c \sameAs c}$ can be
expensive for backward chaining: due to reflexivity rules~\eqref{eq:eq-ref}, in
lines~\ref{alg:checkProvability:ref}--\ref{alg:checkProvability:ref:allProved}
we may end up recursively proving each fact $G$ that mentions $c$. However, $F$
holds trivially if $E'$ contains a fact mentioning $c$, in which case we can
consider $F$ proven and avoid any recursion. This is implemented in
lines~\ref{alg:saturate:C:if-ref}--\ref{alg:saturate:C:if-ref:derive}.

\interspacing

\noindent\textbf{Avoiding Redundant Derivations.} Assume that $\Gamma$ contains
a rule ${y_1 \sameAs y_2 \leftarrow \triple{x}{\duri{R}}{y_1} \wedge
\triple{x}{\duri{R}}{y_2}}$, and consider a call to $\mathsf{saturate}()$ in
which facts $\triple{a}{R}{b}$ and $\triple{a}{R}{d}$ both end up in $P$.
Unless we are careful, in line~\ref{alg:saturate:F:derive} we might consider
substitution ${\tau_1 = \{x \mapsto a, \, y_1 \mapsto b, \, y_2 \mapsto d \}}$
twice: once when we match ${\triple{a}{R}{b}}$ to
${\triple{x}{\duri{R}}{y_1}}$, and once when we match ${\triple{a}{R}{d}}$ to
${\triple{x}{\duri{R}}{y_2}}$. Such redundant derivations can substantially
degrade performance.

To solve this problem, set $V$ keeps track of the processed subset of $P$:
after we extract a fact $F$ from $P$, in line~\ref{alg:saturate:add} we
transfer $F$ to $V$; moreover, in line~\ref{alg:saturate:P:rules:2} we evaluate
rule bodies in ${V \setminus \merged{P}}$ instead of ${P \setminus
\merged{P}}$. Now if ${\triple{a}{R}{b}}$ is processed before
${\triple{a}{R}{d}}$, at that point we have ${\triple{a}{R}{d} \not\in V}$, so
$\tau_1$ is not returned as a match in line~\ref{alg:saturate:P:rules:2}; the
situation when ${\triple{a}{R}{d}}$ is processed first is analogous. This,
however, does not eliminate all repetition: ${\tau_2 = \{ x \mapsto a, y_1
\mapsto b, y_2 \mapsto b \}}$ is still considered when ${\triple{a}{R}{b}}$ is
matched to either of the two body atoms in the rule. Therefore, we annotate
(see Section~\ref{sec:incremental:formalisation}) the body atoms of rules so
that, whenever $F$ is matched to some body atom $B_i$, no atom $B_j$ preceding
$B_i$ in the body of $r$ can be matched to $F$. In our example, $\tau_2$ is
thus considered only when ${\triple{a}{R}{b}}$ is matched to
${\triple{x}{\duri{R}}{y_1}}$.

\BFeq avoids redundant derivations in similar vein: set $O$ tracks the
processed subset of $D$; in lines~\ref{alg:BF-sameAs:repl:iterate-G}
and~\ref{alg:BF-sameAs:rules:2} we match the relevant rules in ${I \setminus
O}$; and in line~\ref{alg:BF-sameAs:rules:processed} we add a fact to $O$ once
it has been processed.

\interspacing

\noindent\textbf{Disproved Facts.} For each ${F \in I}$ with ${F^\pi \cap J' =
\emptyset}$, no fact in $F^\pi$ participates in a proof of any fact in $J'$.
Thus, in line~\ref{alg:BF-sameAs:disproved} we collect all such facts in a set
$S$ of \emph{disproved} facts, and in
lines~\ref{alg:checkProvability:ref:iterate-G},
\ref{alg:checkProvability:repl}, and~\ref{alg:checkProvability:rules:2} we
exclude $S$ from backward chaining.

\interspacing

\noindent\textbf{Singletons.} If we encounter ${F = c \sameAs c}$ in
line~\ref{alg:BF-sameAs:repl} or~\ref{alg:checkProvability:repl} where $c$
represents only itself (i.e., ${|c^\pi| = 1}$), then we know that no fact in
$F^\pi$ can derive a new fact using rules
\eqref{eq:eq-repl1}--\eqref{eq:eq-repl3}, and so we can avoid considering rules
\eqref{eq:eq-repl1}--\eqref{eq:eq-repl3}.

\subsection{Formalisation}\label{sec:incremental:formalisation}

\begin{figure*}[p]
\begin{minipage}[t]{0.47\textwidth}
\begin{tabular}[t]{@{}c@{\ \ }l@{}}
    \hline
    \multicolumn{2}{c}{\textbf{Input Variables}} \\
    \hline
    $E$             & : the explicit facts \\
    $\Pi$           & : the datalog program \\
    $(\pi,I)$       & : the r-materialisation of $E$ w.r.t.\ $\Pi$ \\
    $E^-$           & : the facts to delete from $E$ \\[2ex]
    \multicolumn{2}{c}{\textbf{Global Temporary Variables}} \\
    \hline
    $D$             & : the consequences of $E^-$ that might require deletion \\
    $O$             & : the processed subset of $D$ \\
    $C$             & : the facts whose provability must be checked \\
    $\gamma$        & : the mapping recording the changes needed to $\pi$ \\
    $P$             & : the proved facts \\
    $\merged{P}$    & : the proved rewritten facts  \\
    $Y$             & : the proved facts not in $C^\pi$ \\
    $V$             & : the processed subset of $P$ \\
    $S$             & : the set of disproved facts \\[-1.4ex]
\end{tabular}
\begin{algorithm}[H]
\caption{$\mathsf{B/F}^\sameAs()$}\label{alg:BF-sameAs}
\begin{algorithmic}[1]
    \similar\State $C \defeq D \defeq P \defeq \merged{P} \defeq Y \defeq O \defeq S \defeq V \defeq \emptyset$             \label{alg:BF-sameAs:initialise}
    \new\State initialise $\gamma$ as identity and $\Gamma \defeq \Pi$
    \For{\textbf{each} $F \in E^-$}                                                                                         \label{alg:BF-sameAs:update-E-D:1}
        \State \textbf{if} $\delete{E}{F}$ \textbf{then} $\add{D}{\pi(F)}$                                                  \label{alg:BF-sameAs:update-E-D:2}
    \EndFor
    \While{$(F \defeq \nnext{D}) \neq \varepsilon$}                                                                         \label{alg:BF-sameAs:extract-F}
        \State $\mathsf{checkProvability}(F)$                                                                               \label{alg:BF-sameAs:checkProvability}
        \similar\For{\textbf{each} $G \in C$ s.t.\ $\mathsf{allDisproved}(G)$}
            $\add{S}{G}$                                                                                                    \label{alg:BF-sameAs:disproved}
        \EndFor
        \similar\If{not $\mathsf{allProved}(F)$}                                                                            \label{alg:BF-sameAs:notAllProved}
            \new\If{$F = c \sameAs c$ and $|c^\pi| > 1$}                                                                    \label{alg:BF-sameAs:repl}
                \new\For{\textbf{each} $G \in I \setminus O$ with $c \in \voc{G}$}                                          \label{alg:BF-sameAs:repl:iterate-G}
                    \new\State $\add{D}{G}$                                                                                 \label{alg:BF-sameAs:repl:add-G}
                \EndFor
            \EndIf
            \new\For{\textbf{each} $c \in \voc{F}$}
                $\add{D}{c \sameAs c}$                                                                                      \label{alg:BF-sameAs:ref}
            \EndFor
            \For{\textbf{each} $\langle r,Q,\sigma \rangle \in \matchBody{\pi(\Pi)}{F}$}                                    \label{alg:BF-sameAs:rules:1}
                \For{\textbf{each} $\tau \in \evaluate{[I \setminus O]}{Q}{\{ F \}}{\sigma}$}                               \label{alg:BF-sameAs:rules:2}
                    \State $\add{D}{\head{r}\tau}$                                                                          \label{alg:BF-sameAs:rules:derive}
                \EndFor
            \EndFor
            \State $\add{O}{F}$                                                                                             \label{alg:BF-sameAs:rules:processed}
        \EndIf
    \EndWhile
    \similar\State $\mathsf{propagateChanges}()$                                                                            \label{alg:BF-sameAs:propagateChanges}
    \algstore{B-F-delete}
\end{algorithmic}
\end{algorithm}
\vspace{-0.9cm}
\begin{algorithm}[H]\newalgorithm
\caption{$\mathsf{propagateChanges}()$}\label{alg:propagateChanges}
\begin{algorithmic}[1]
    \algrestore{B-F-delete}
    \For{\textbf{each} $c \sameAs c \in C$ and each $d$ with $\pi(d) = c$}                                                  \label{alg:propagateChanges:pi-1}
        \State $\pi(d) \defeq \gamma(d)$                                                                                    \label{alg:propagateChanges:pi-2}
    \EndFor
    \For{\textbf{each} $F \in D \setminus (P \setminus \merged{P})$}
        $\delete{I}{F}$                                                                                                     \label{alg:propagateChanges:delete-unproved}
    \EndFor
    \For{\textbf{each} $F \in P \setminus \merged{P}$}
        $\add{I}{\pi(F)}$                                                                                                   \label{alg:propagateChanges:add-proved}
    \EndFor
    \algstore{B-F-delete}
\end{algorithmic}
\end{algorithm}
\vspace{-0.9cm}
\begin{algorithm}[H]\newalgorithm
\caption{Auxiliary functions}\label{alg:auxiliary-functions}
\begin{algorithmic}
    \Statex $\mathsf{allProved}(F)$:
    \Statex \qquad $\true$ iff $F \not\in S$ and $\gamma(F^\pi) \subseteq (P \setminus \merged{P})$
    \vspace{1ex}
    \Statex $\mathsf{allDisproved}(F)$:
    \Statex \qquad $\true$ iff $\gamma(F^\pi) \cap (P \setminus \merged{P}) = \emptyset$
\end{algorithmic}
\end{algorithm}
\end{minipage}
\;\;
\begin{minipage}[t]{0.49\textwidth}
\vspace{-0.45cm}
\begin{algorithm}[H]
\caption{$\mathsf{checkProvability}(F)$}\label{alg:checkProvability}
\begin{algorithmic}[1]
    \algrestore{B-F-delete}
    \If{not $\add{C}{F}$}
        \Return                                                                                                             \label{alg:checkProvability:add-F}
    \EndIf
    \State $\mathsf{saturate}()$                                                                                            \label{alg:checkProvability:saturate}
    \similar\If{$\mathsf{allProved}(F)$}                                                                                    \label{alg:checkProvability:allProved}
        \similar\Return
    \similar\EndIf
    \new\If{$F = c \sameAs c$}                                                                                              \label{alg:checkProvability:ref}
        \new\For{\textbf{each} $G \in I \setminus S$ with $c \in \voc{G}$}                                                  \label{alg:checkProvability:ref:iterate-G}
            \new\State $\mathsf{checkProvability}(G)$                                                                       \label{alg:checkProvability:ref:recurse}
            \new\If{$\mathsf{allProved}(F)$}                                                                                \label{alg:checkProvability:ref:allProved}
                \Return
            \EndIf
        \EndFor
    \EndIf
    \new\For{\textbf{each} $c \in \voc{F}$ with $c \sameAs c \not\in S$ and $|c^\pi| > 1$}                                  \label{alg:checkProvability:repl}
        \new\State $\mathsf{checkProvability}(c \sameAs c)$                                                                 \label{alg:checkProvability:repl:recurse}
        \new\If{$\mathsf{allProved}(F)$}                                                                                    \label{alg:checkProvability:repl:allProved}
            \Return
        \EndIf
    \EndFor
    \For{\textbf{each} $\langle r,Q,\sigma \rangle \in \matchHead{\pi(\Pi)}{F}$}                                            \label{alg:checkProvability:rules:1}
        \For{\textbf{each} $\tau \in \evaluate{[I \setminus S]}{Q}{\emptyset}{\sigma}$ and $G \in \body{r}\tau$}            \label{alg:checkProvability:rules:2}
            \State $\mathsf{checkProvability}(G)$                                                                           \label{alg:checkProvability:rules:recurse}
            \If{$\mathsf{allProved}(F)$}                                                                                    \label{alg:checkProvability:rules:allProved}
                \Return
            \EndIf
        \EndFor
    \EndFor
    \algstore{B-F-delete}
\end{algorithmic}
\end{algorithm}
\vspace{-0.9cm}
\begin{algorithm}[H]
\caption{$\mathsf{saturate}()$}\label{alg:saturate}
\begin{algorithmic}[1]
    \algrestore{B-F-delete}
    \While{$(F \defeq \nnext{C}) \neq \varepsilon$}                                                                         \label{alg:saturate:C}
        \new\If{$F = c \sameAs c$}                                                                                          \label{alg:saturate:C:if-ref}
            \new\For{\textbf{each} $d \in \voc{E}$ with $\pi(d) = c$}                                                       \label{alg:saturate:C:if-ref:check-E}
                \new\State $\add{P}{\gamma(d) \sameAs \gamma(d)}$                                                           \label{alg:saturate:C:if-ref:derive}
            \EndFor
        \EndIf
        \similar\For{\textbf{each} $G \in F^\pi \cap(E \cup Y)$}                                                            \label{alg:saturate:C:expand-F-derive}
             $\add{P}{\gamma(G)}$
        \EndFor
    \EndWhile
    \While{$(F \defeq \nnext{P}) \neq \varepsilon$}                                                                         \label{alg:saturate:F}
        \similar\If{$F \in P \setminus (\merged{P} \cup V)$ and $\add{V}{F}$}                                               \label{alg:saturate:add}
            \new\State $G \defeq \gamma(F)$
            \new\If{$F \neq G$}
                $\add{\merged{P}}{F}$ and $\add{P}{G}$                                                                      \label{alg:saturate:RewF}
            \new\ElsIf{$F = a \sameAs b$ and $a \neq b$}
                $\mathsf{rewrite}(a,b)$                                                                                     \label{alg:saturate:RewC}
            \new\Else
                \similar\For{\textbf{each} $c \in \voc{G}$}
                    $\mathsf{prove}(c \sameAs c)$                                                                           \label{alg:saturate:P:ref}
                \EndFor
                \For{\textbf{each} $\langle r,Q,\sigma \rangle \in \matchBody{\Gamma}{G}$}                                  \label{alg:saturate:P:rules:1}
                    \similar\For{\textbf{each} $\tau \in \evaluate{[V \setminus \merged{P}]}{Q}{\{ G \}}{\sigma}$}          \label{alg:saturate:P:rules:2}
                        \similar\State $\mathsf{prove}(\head{r}\tau)$                                                       \label{alg:saturate:F:derive}
                    \EndFor
                \EndFor
            \EndIf
        \EndIf
    \EndWhile
    \algstore{B-F-delete}
\end{algorithmic}
\end{algorithm}
\vspace{-0.9cm}
\begin{algorithm}[H]\newalgorithm
\caption{$\mathsf{rewrite}(a,b)$}\label{alg:rewrite}
\begin{algorithmic}[1]
    \algrestore{B-F-delete}
    \State $c \defeq \min \{ a,b \}$ \quad $d \defeq \max \{ a,b \}$
    \State $\mergeInto{\gamma}{d}{c}$                                                                                       \label{alg:rewrite:merge}
    \For{\textbf{each} $F \in P \setminus \merged{P}$ with $d \in \voc{F}$}
        \State $\add{\merged{P}}{F}$ and $\add{P}{\gamma(F)}$                                                               \label{alg:rewrite:normF}
    \EndFor
    \For{\textbf{each} $r \in \Gamma$ with $r \neq \gamma(r)$}
        \State replace $r$ in $\Gamma$ with $r' \defeq \gamma(r)$                                                           \label{alg:rewrite:normR}
        \For{\textbf{each} $\tau \in \evaluate{[V \setminus \merged{P}]}{\body{r'}}{\emptyset}{\emptyset}$}                 \label{alg:rewrite:reapply}
            \State $\mathsf{prove}(\head{r'}\tau)$                                                                          \label{alg:rewrite:prove}
        \EndFor
    \EndFor
    \algstore{B-F-delete}
\end{algorithmic}
\end{algorithm}
\vspace{-0.9cm}
\begin{algorithm}[H]\newalgorithm
\caption{$\mathsf{prove}(F)$}\label{alg:prove}
\begin{algorithmic}[1]
    \algrestore{B-F-delete}
    \State\textbf{if} $\pi(F) \in C$ \textbf{ then } $\add{P}{F}$ \textbf{ else } $\add{Y}{F}$                              \label{alg:prove:add}
\end{algorithmic}
\end{algorithm}
\end{minipage}
\bigskip
\begin{restatable}{theorem}{thmcorrectness}\label{thm:correctness}
    Let $(\pi,I)$ be the r-materialisation of a dataset $E$ w.r.t.\ a program
    $\Pi$, and let $E^-$ be a dataset.
    \begin{enumerate}
        \item Algorithm \ref{alg:BF-sameAs} terminates, at which point
        $(\pi,I)$ contains the r-materialisation of ${E \setminus E^-}$ w.r.t.\
        $\Pi$.

        \item Each combination of a rule $r$ and a substitution $\tau$ is
        considered at most once in line~\ref{alg:saturate:F:derive} or
        line~\ref{alg:rewrite:prove}, but not both.

        \item Each combination of a rule $r$ and a substitution $\tau$ is
        considered at most once in line~\ref{alg:BF-sameAs:rules:derive}.
    \end{enumerate}
\end{restatable}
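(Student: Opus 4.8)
The plan is to prove the three parts largely independently, with the bulk of the work going into the correctness claim in part~1. Throughout I would fix the reference objects of the setup: the materialisation $J' = (\Pi \cup \PsameAs)^\infty(E')$ of $E' = E \setminus E^-$ and its unique rewriting $(\pi',I')$, which is the r-materialisation we must reach. Two semantic facts carry the argument: datalog is monotone, so $J' \subseteq J$ (hence, at the level of $J$ the update only \emph{deletes} facts, while at the level of the rewriting it may \emph{add} facts to $I$, exactly as the introductory example shows); and $J'$ is closed under $\PsameAs$, so by the remark in Section~\ref{sec:preliminaries} its rewriting satisfies $I'^{\pi'} = J'$. The goal of part~1 is then to show that after $\mathsf{propagateChanges}()$ the in-place-modified pair $(\pi,I)$ represents $J'$, i.e.\ $J' \subseteq I^\pi$ (the reverse, soundness, direction pinning it down to the unique rewriting); the theorem's weaker ``contains'' follows a fortiori.

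The heart of the argument is a family of invariants tying the temporary variables to $J'$. The central one concerns $\mathsf{saturate}()$: whenever control leaves its loops, the pair $(\gamma, P \setminus \merged{P})$ is exactly the rewriting of the set $L_C$ of all facts derivable from $E'$ by a derivation all of whose facts lie in $C^\pi$, with $Y$ holding precisely the ``delayed'' facts derived while outside $C^\pi$. This needs an induction on derivation length together with a sub-induction showing that the bookkeeping around a change to $\gamma$ is faithful: each call to $\mathsf{rewrite}(a,b)$ re-establishes normality of $P \setminus \merged{P}$ and of $\Gamma$ and re-derives everything the renormalised rules now entail, so merging constants never loses a consequence. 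Layered on top, the invariant for $\mathsf{checkProvability}(F)$ is that on return, for every $G \in F^\pi$ we have $G \in J'$ iff $\gamma(G) \in P \setminus \merged{P}$; here the backward-chaining cases for the reflexivity rules~\eqref{eq:eq-ref}, the replacement rules~\eqref{eq:eq-repl1}--\eqref{eq:eq-repl3}, and the rules of $\pi(\Pi)$ must be shown jointly exhaustive of every way a fact of $F^\pi$ could be proved in $J'$, while the optimisations (the disproved set $S$, the singleton shortcut, and the reflexivity shortcut in lines~\ref{alg:saturate:C:if-ref}--\ref{alg:saturate:C:if-ref:derive}) must be shown to prune no genuinely needed derivation. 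Finally I would establish the main-loop invariant that $D$ over-approximates the facts of $I$ whose represented status can change, so that on loop exit $(\gamma, P \setminus \merged{P})$ records exactly the difference between $J$ and $J'$ at the level of representatives, and then verify by inspection that $\mathsf{propagateChanges}()$ turns this record into the required update of $(\pi,I)$.

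For termination I would argue that $C$ never shrinks and, since all constants come from the finite vocabulary $\voc{E} \cup \voc{\Pi}$, only finitely many facts exist; the guard in line~\ref{alg:checkProvability:add-F} adds each fact to $C$ at most once, bounding the recursion depth of $\mathsf{checkProvability}$, and the while loops of $\mathsf{saturate}()$ and of $\mathsf{B/F}^\sameAs()$ each process a given fact at most once via the guards involving $V$ and $O$. Care is needed because $\gamma$ changes during execution, but $\merged{P}$ ensures a fact superseded by renormalisation is never reprocessed in its stale form.

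Parts~2 and~3 are the non-redundancy claims and use the same ``last match wins'' argument. For part~3 I would fix $r$ and $\tau$ with $\body{r}\tau \subseteq I$, take $F^\ast$ to be the fact of $\body{r}\tau$ last added to $O$, and use the body-atom annotation (which forbids matching $F^\ast$ to any atom preceding the one it is canonically assigned) to conclude that the pair is emitted in line~\ref{alg:BF-sameAs:rules:derive} exactly once, when $F^\ast$ is processed against its designated atom and the remaining atoms are found in $I \setminus O$. Part~2 is the same argument with $V$ in place of $O$, and the genuinely delicate point---the clause ``but not both''---is the interaction with renormalisation: once $\mathsf{rewrite}$ replaces a rule by its $\gamma$-normal form the old rule object leaves $\Gamma$, so a pair $(r,\tau)$ is attributed either to the pre-renormalisation rule in line~\ref{alg:saturate:F:derive} or to its normalised successor in line~\ref{alg:rewrite:prove}, and transferring affected facts into $\merged{P}$ keeps the two evaluation domains $V \setminus \merged{P}$ disjoint on the relevant facts, so no pair is counted on both lines. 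I expect this bookkeeping around $\gamma$-changes---maintaining the $\mathsf{saturate}()$ invariant across $\mathsf{rewrite}$ and proving the disjointness underlying ``but not both''---to be the main obstacle, since it is precisely where rewriting and incremental reasoning interfere.
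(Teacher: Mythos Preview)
Your plan tracks the paper's proof closely: the paper too factors the argument into a correctness claim for $\mathsf{saturate}$ (deferred to the r-materialisation paper), an invariant maintained by $\mathsf{checkProvability}$, a completeness property for $D$, and a final verification of $\mathsf{propagateChanges}$; parts~2 and~3 use the same annotation/processed-set mechanism. Two points need repair.

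First, your stated invariant for $\mathsf{checkProvability}(F)$---that on return, $G \in J'$ iff $\gamma(G) \in P \setminus \merged{P}$ for every $G \in F^\pi$---is true but cannot be established by induction on the recursion. When a recursive call $\mathsf{checkProvability}(\pi(H))$ returns immediately in line~\ref{alg:checkProvability:add-F} because $\pi(H)\in C$, the frame that inserted $\pi(H)$ into $C$ may still be on the stack, so you cannot yet appeal to the invariant for $\pi(H)$. The paper therefore proves, by induction on recursion depth, only the weaker property that for each $G\in C$ either $G^\pi\subseteq(P\setminus\merged{P})^\gamma$ or every derivation of a fact in $G^\pi$ has all immediate subgoals with $\pi$-image in $C$ (this is exactly your ``jointly exhaustive'' remark). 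From that and the $\mathsf{saturate}$ invariant, a \emph{separate} induction on derivation height yields the global equality $(P\setminus\merged{P})^\gamma = C^\pi\cap J'$, which then specialises to your postcondition. You will need this two-stage structure; a single induction does not close in the presence of cyclic backward chaining.

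Second, in part~3 your ``last added to $O$'' is backwards. Line~\ref{alg:BF-sameAs:rules:2} evaluates the remaining body in $I\setminus O$, so a match $\tau$ survives only while \emph{none} of the other body instances is yet in $O$; hence $(r,\tau)$ is emitted, if at all, when the \emph{first} body fact among those ever processed reaches line~\ref{alg:BF-sameAs:rules:1}, not the last. (Your ``last'' intuition is the right one for part~2, where the body is evaluated in $V\setminus\merged{P}$, i.e.\ in already-processed facts.) The paper avoids identifying the unique emission point altogether and argues by contradiction: assuming two emissions forces some body instance to lie both in and out of $I\setminus O$.
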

\end{figure*}

We borrow the notation by \citeA{mnph15incremental-BF} to formalise \BFeq. We
recapitulate some definitions, present the pseudo-code, and formally state the
algorithm's properties.

Given a dataset $X$ and a fact $F$, operation $\add{X}{F}$ adds $F$ to $X$, and
operation $\delete{X}{F}$ removes $F$ from $X$; both return $\true$ if $X$ was
changed. For iteration, operation $\nnext{X}$ returns the next fact from $X$,
or $\nofact$ if no such fact exists.

An \emph{annotated query} has the form ${Q = B_1^{\bowtie_1} \wedge \dots
\wedge B_k^{\bowtie_k}}$, where each $B_i$ is an atom and \emph{annotation}
$\bowtie_i$ is either empty or equal to $\neq$. Given datasets $X$ and $Y$ and
a substitution $\sigma$, operation $\evaluate{X}{Q}{Y}{\sigma}$ returns a set
containing each smallest substitution $\tau$ such that ${\sigma \subseteq
\tau}$ and, for ${1 \leq i \leq k}$, (i)~${B_i\tau \in X}$ if $\bowtie_i$ is
empty or (ii)~${B_i\tau \in X \setminus Y}$ if $\bowtie_i$ is $\neq$. We often
write $[Z \setminus W]$ instead of $X$, meaning that $Q$ is evaluated in the
difference of sets $Z$ and $W$.

Given a fact $F$, operation $\matchHead{\Pi}{F}$ returns all tuples ${\langle
r,Q,\sigma \rangle}$ with ${r \in \Pi}$ a rule of the form \eqref{eq:rule},
$\sigma$ a substitution such that ${H\sigma = F}$, and ${Q = B_1 \wedge \dots
\wedge B_n}$. Moreover, operation $\matchBody{\Pi}{F}$ returns all tuples
${\langle r,Q,\sigma \rangle}$ with ${r \in \Pi}$ a rule of the form
\eqref{eq:rule}, $\sigma$ a substitution such that ${B_i\sigma = F}$ for some
${1 \leq i \leq n}$, and $Q$ is defined as
\begin{align}
    Q = B_1^{\neq} \wedge \dots \wedge B_{i-1}^{\neq} \wedge B_{i+1} \wedge \dots \wedge B_n. \label{eq:seminaive-annotated-query}
\end{align}

Finally, given a mapping $\gamma$ of constants to constants, and constants $d$
and $c$, operation $\mergeInto{\gamma}{d}{c}$ modifies $\gamma$ so that
${\gamma(e) = c}$ holds for each constant $e$ with ${\gamma(e) = d}$.

\BFeq consists of Algorithms~\ref{alg:BF-sameAs}--\ref{alg:prove}.
Theorem~\ref{thm:correctness} shows that the algorithm is correct and that,
just like the semina{\"i}ve algorithm \cite{abiteboul95foundation}, it does not
repeat derivations; the proof is given in \ifdraft{the appendix}{a technical
report \cite{DBLP:journals/corr/MotikNPH15}}.

\section{Evaluation}\label{sec:evaluation}

\begin{table*}[ht!]
\begin{center}\scriptsize
\begin{results}{UOBM-100-L}{24.5M}{210}{46.4M}{69}{79.3M}{46.7M}{122}{361M}
    \result{100}
        {146}
            {0.6}{0.7k}
            {45.1}{79.3M}
        {146}
            {4.6}{32.8k}
            {94.6}{361M}

    \result{5k}
        {7.8k}
            {1.2}{45.8k}
            {42.5}{79.3M}
        {7.9k}
            {7.1}{805k}
            {93.1}{361M}

    \result{1.3M}
        {1.9M}
            {18.2}{8.7M}
            {39.2}{75.4M}
        {2.0M}
            {38.0}{98.0M}
            {89.5}{361M}

    \result{2.5M}
        {3.9M}
            {29.9}{15.8M}
            {41.7}{71.5M}
        {4.0M}
            {54.5}{151M}
            {83.7}{345M}

    \result{3.8M}
        {5.8M}
            {31.8}{22.3M}
            {37.4}{67.7M}
        {5.9M}
            {70.9}{188M}
            {79.3}{329M}

    \result{5M}
        {7.7M}
            {41.2}{28.4M}
            {36.2}{63.8M}
        {7.9M}
            {73.8}{218M}
            {73.2}{314M}
\end{results}
\begin{results}{UOBM-100-U}{24.5M}{279}{78.8M}{225}{719M}{---}{---}{---}

    \result{100}
        {197}
            {3.5}{21.5k}
            {209}{719M}
        {---}
            {\phantom{nnii}---}{\phantom{nnii}---}
            {\phantom{nnii}---}{\phantom{nnii}---}

    \result{1k}
        {1.8k}
            {277}{581M}
            {219}{719M}
        {---}
            {---}{---}
            {---}{---}

    \result{2.5k}
        {4.3k}
            {338}{584M}
            {209}{719M}
        {---}
            {---}{---}
            {---}{---}

    \result{5k}
        {8.5k}
            {345}{584M}
            {214}{719M}
        {---}
            {---}{---}
            {---}{---}

\end{results}
\newresultsline
\begin{results}{Claros-L}{18.8M}{1.3k}{79.5M}{83}{129M}{102M}{3.9k}{11.0G}
    \result{100}
        {209}
            {8.3}{797k}
            {77.4}{135M}
        {819}
            {2476}{15.8G}
            {3174}{11.0G}

    \result{5k}
        {11.2k}
            {9.1}{895k}
            {77.0}{135M}
        {18.6k}
            {2609}{15.8G}
            {3166}{11.0G}

    \result{750k}
        {1.7M}
            {29.5}{14.5M}
            {80.9}{131M}
        {4.0M}
            {2816}{17.1G}
            {2690}{9.5G}

    \result{1.5M}
        {3.5M}
            {46.1}{26.5M}
            {81.5}{127M}
        {10.1M}
            {2757}{17.4G}
            {1933}{7.3G}

    \result{2.3M}
        {5.3M}
            {63.9}{38.4M}
            {77.7}{123M}
        {15.3M}
            {3092}{18.3G}
            {1389}{5.5G}

    \result{3M}
        {7.2M}
            {78.4}{48.8M}
            {72.4}{119M}
        {19.4M}
            {3170}{18.6G}
            {1075}{4.4G}
\end{results}
\begin{results}{Claros-LE}{18.8M}{1.3k}{539M}{4514}{12.6G}{562M}{9048}{26.3G}
    \result{100}
        {522}
            {16.1}{617k}
            {4397}{12.6G}
        {1132}
            {5703}{25.8G}
            {8693}{26.3G}

    \result{2.5k}
        {179k}
            {31.6}{9.9M}
            {4430}{12.6G}
        {---}
            {---}{---}
            {---}{---}

    \result{5k}
        {427k}
            {39.4}{10.7M}
            {4392}{12.6G}
        {435k}
            {5845}{25.8G}
            {9383}{26.3G}

    \result{7.5k}
        {609k}
            {44.8}{11.6M}
            {4713}{12.6G}
        {---}
            {---}{---}
            {---}{---}

    \result{10k}
        {781k}
            {4300}{12.4G}
            {4627}{12.6G}
        {---}
            {---}{---}
            {---}{---}

\end{results}

\newresultsline

\begin{results}{DBpedia-L}{113M}{3.4k}{136M}{49.3}{36.6M}{139M}{641}{895M}
    \result{100}
        {105}
            {0.3}{91}
            {47.5}{36.6M}
        {105}
            {8.9}{1.7M}
            {251}{895M}

    \result{5k}
        {5.0k}
            {0.4}{24.4k}
            {64.6}{36.6M}
        {5.3k}
            {20.3}{5.7M}
            {256}{895M}

    \result{1.8M}
        {1.8M}
            {29.4}{2.1M}
            {48.7}{36.3M}
        {2.0M}
            {50.0}{72.2M}
            {239}{895M}

    \result{3.5M}
        {3.6M}
            {38.9}{3.6M}
            {49.0}{35.9M}
        {3.9M}
            {85.5}{116M}
            {237}{881M}

    \result{5.3M}
        {5.3M}
            {52.2}{4.9M}
            {54.3}{35.5M}
        {5.9M}
            {89.8}{152M}
            {232}{866M}

    \result{7M}
        {7.1M}
            {63.1}{6.2M}
            {50.7}{35.1M}
        {7.8M}
            {103}{184M}
            {227}{852M}
\end{results}
\begin{results}{UniProt-L}{123M}{451}{179M}{118}{183M}{229M}{527}{1.6G}
    \result{100}
	{125}
	   {2.5}{892}
	   {235}{238M}
	{125}
	   {14.3}{6.0k}
	   {490}{1.6G}

    \result{5k}
	{6.1k}
	   {3.4}{35k}
	   {221}{238M}
	{6.1k}
	   {17.5}{271k}
	   {482}{1.6G}

    \result{4.5M}
	{5.7M}
	   {84.0}{24.8M}
	   {204}{232M}
	{5.7M}
	   {125}{190M}
	   {475}{1.5G}

    \result{9M}
	{11.5M}
	    {137}{46.7M}
	    {216}{225M}
	{11.5M}
	    {192}{344M}
	    {478}{1.5M}

    \result{13.5M}
	{17.4M}
	    {209}{67.1M}
	    {220}{218M}
	{17.4M}
	    {315}{483M}
	    {473}{1.4G}

    \result{18M}
	{23.4M}
	    {220}{86.5M}
	    {217}{210M}
	{23.4M}
	    {371}{613M}
	    {481}{1.4G}

\end{results}
\newresultsline
\begin{results}{OpenCyc-L}{2.4M}{261k}{141M}{164}{280M}{1.2G}{3.5k}{12.9G}
    \result{100}
        {5.4k}
            {15.5}{405k}
            {220}{280M}
        {50.0k}
            {472}{8.5M}
            {3296}{12.9G}

    \result{1k}
        {53.1k}
            {1062}{69.5M}
            {222}{280M}
        {5.1M}
            {5537}{2.0G}
            {3479}{12.9G}

    \result{2.5k}
        {130k}
            {1078}{69.8M}
            {178}{279M}
        {5.8M}
            {5339}{2.1G}
            {3621}{12.8G}

    \result{5k}
        {261k}
            {1123}{70.4M}
            {177}{279M}
        {7.2M}
            {5475}{2.1G}
            {3334}{12.8G}
\end{results}

\end{center}
\caption{Experimental results}\label{tab:results}
\end{table*}

We have implemented and evaluated the \BFeq algorithm in the open-source RDF
data management system RDFox. The system and the test data are all available
online.\footnote{\url{https://krr-nas.cs.ox.ac.uk/2015/IJCAI/RDFox/index.html}}

\interspacing

\noindent\textbf{Objectives.} Updates can be handled either incrementally or by
rematerialisation, and equality can be handled either by rewriting or by
axiomatisation, giving rise to four possible approaches to updates. Our first
objective was to compare all of them to determine their relative strengths and
weaknesses.

As $E^-$ increases in size, incremental update becomes harder, but
rematerialisation becomes easier. Thus, our second objective was to investigate
the relationship between the update size and the performance of the respective
approaches.

\interspacing

\noindent\textbf{Datasets.} Equality is often used in OWL ontologies on the
Semantic Web, so we based our evaluation on several well-known synthetic and
`real' RDF datasets.

Each dataset comprises an OWL ontology and a set of explicit facts $E$.
\emph{UOBM} \cite{DBLP:conf/esws/MaYQXPL06} extends LUBM
\cite{DBLP:journals/ws/GuoPH05}, and we used the data generated for 100
universities; we did not use LUBM because it does not use $\sameAs$.
\emph{Claros} contains information about cultural
artefacts.\footnote{\url{http://www.clarosnet.org/XDB/ASP/clarosHome/}}
\emph{DBpedia} consists of structured information extracted from
Wikipedia.\footnote{\url{http://dbpedia.org/}} \emph{UniProt} is a knowledge
base about protein sequences;\footnote{\url{http://www.uniprot.org}} we
selected a subset of the original (very large) set of facts. Finally,
\emph{OpenCyc} is an extensive, manually curated upper
ontology.\footnote{\url{http://www.cyc.com/platform/opencyc}}

Following \citeA{DBLP:conf/www/ZhouGHWB13}, we converted the ontologies into
\emph{lower} (\textbf{L}) and \emph{upper bound} (\textbf{U}) programs: the
former is the OWL 2 RL subset of the ontology transformed into datalog as
described by \citeA{GHVD03}, and the latter captures all consequences of the
ontology using an unsound approximation. Upper bound programs are interesting
as they tend to be `hard'. We also manually extended the lower bound
(\textbf{LE}) of Claros with `hard' rules (e.g., we defined related documents
as pairs of documents that refer to the same topic).

\interspacing

\noindent\textbf{Update Sets.} For each dataset, we randomly selected several
subsets $E^-$ of $E$. We considered small updates of 100 and 5k facts on all
datasets. Moreover, for each dataset we identified the `equilibrium' point $n$
at which \BFeq and Remat${^\sameAs}$ take roughly the same time. If $n$ was
large, we generated subsets $E^-$ with sizes equal to 25\%, 50\%, 75\%, and
100\% of $n$; otherwise, we divided $n$ in an ad hoc way.

\interspacing

\noindent\textbf{Test Setting.} We used a Dell server with two 2.60GHz Intel
Xeon E5-2670 CPUs and 256 GB of RAM running Fedora release 20, kernel version
3.17.7-200.fc20.x86\_64.

\interspacing

\noindent\textbf{Test Results.} Table~\ref{tab:results} summarises our test
results. For each dataset, we show the numbers of explicit facts ($|E|$) and
rules ($|\Pi|$), the number of facts in the initial r-materialisation
($|I^\sameAs|$), and the time ($T^\sameAs$) and the number of derivations
($D^\sameAs$) used to compute it via rewriting; moreover, we show the latter
three numbers for the initial materialisation computed using axiomatised
equality ($|I^A|$, $T^A$, and $D^A$). For each set $E^-$, we show the numbers
${\Delta |I^\sameAs|}$ and ${\Delta |I^A|}$ of deleted facts with rewriting and
axiomatisation, respectively, as well as the times (T) and the number of
derivations (D) for each of the four update approaches. All times are in
seconds. We could not complete all axiomatisation tests with Claros-LE as each
run took about two hours. Due to the upper bound transformation, the
r-materialisation of UOBM-100-U contains a constant $c$ with ${|c^\pi| =
3930}$; thus, when $\sameAs$ is axiomatised, deriving just all equalities
involving $c^\pi$ requires $3930^3 = 60$ billion derivations, which causes the
initial materialisation to last longer than four hours. The number of
derivations $D$ in \BFeq is the sum of the number of times a fact is determined
as `doubtful' (lines \ref{alg:BF-sameAs:repl:add-G}, \ref{alg:BF-sameAs:ref},
and \ref{alg:BF-sameAs:rules:derive}), checked in backward chaining (lines
\ref{alg:checkProvability:ref:recurse},
\ref{alg:checkProvability:repl:recurse}, and
\ref{alg:checkProvability:rules:recurse}), or derived in forward chaining (line
\ref{alg:prove:add}); we use this number to estimate reasoning difficulty
independently from implementation details.

\interspacing

\noindent\textbf{Discussion.} For updates of 100 facts, \BFeq outperforms all
other approaches, often by orders of magnitude, and in most cases it does so
even for much larger updates.

Even when ${|I^A| - |I^\sameAs|}$ is `small' (i.e., when not many equalities
are derived), \BFeq outperforms \BFA. This seems to be mainly because \BFA
ascribes no special meaning to $\PsameAs$ and so it does not use the
optimisation from
lines~\ref{alg:saturate:C:if-ref}--\ref{alg:saturate:C:if-ref:derive}; thus,
when trying to prove ${c \sameAs c}$, \BFA performs backward chaining via rules
\eqref{eq:eq-ref} and so it potentially examines each fact containing $c$. On
Claros-L, although $|I^A|$ and $|I^\sameAs|$ are of similar sizes, $I^A$
contains one constant $c$ with ${|c^\pi| = 306}$, which gives rise to $306^3$
derivations; this explains the difference in the performance of \BFeq and \BFA.

\Remateq outperforms \BFeq in cases similar to those described by
\citeA{mnph15incremental-BF}. For example, in UOBM, relation
$\duri{hasSameHomeTownWith}$ is symmetric and transitive, which creates cliques
of connected constants; \BF always recomputes each changed clique, thus
repeating most of the `hard' work. Equality connects constants in cliques,
which poses similar problems for \BFeq. For example, due to the constant $c$
with ${|c^\pi| = 3930}$, deleting 5k facts in UOBM-100-U results in only 961k
(about 1.2\% of $|I^\sameAs|$) facts being added to set $C$ in
line~\ref{alg:checkProvability:add-F}, but these facts contribute to 73\% of
the derivations from the initial r-materialisation; thus, \BFeq repeats in
Algorithm~\ref{alg:saturate} a substantial portion of the initial work.

On OpenCyc-L, \Remateq already outperforms \BFeq on updates of 1k triples,
which was surprising since the former makes more derivations than the latter.
Our investigation revealed that OpenCyc-L contains about 200 rules of the form
${\triple{x}{\rdfType}{y} \leftarrow \triple{x}{R_i}{y}}$ that never fire
during forward chaining; however, to check provability of
$\triple{a}{\rdfType}{C}$, Algorithm~\ref{alg:checkProvability} considers in
line~\ref{alg:checkProvability:rules:1} each time each of the 200 rules. After
removing all such `idle' rules manually, \BFeq and \Remateq could update 1k
tuples in roughly the same time. Further analysis revealed that the slowdown in
\BFeq occurs mainly in line~\ref{alg:saturate:C:expand-F-derive}: the condition
is checked for 13.3M facts $F$, and these give rise to 139M facts in $F^\pi$,
each requiring an index lookup; the latter number is similar to the number of
derivations in rematerialisation, which explains the slowdown. We believe one
can check this condition more efficiently by using additional book-keeping.

\section{Conclusion}\label{sec:conclusion}

This paper describes what we believe to be the first approach to incremental
maintenance of datalog materialisation when the latter is computed using
rewriting---a common optimisation used when programs contain equality. Our
algorithm proved to be very effective, particularly on small updates.

In our future work, we shall aim to address the issues we identified in
Section~\ref{sec:evaluation}. For example, to optimise the check in
line~\ref{alg:saturate:C:expand-F-derive}, we shall investigate ways of keeping
track of how explicit facts are merged so that we can implement the test by
iterating over the appropriate subset of $E$ rather than over $F^\pi$.
Moreover, we believe we can considerably improve the efficiency of both the
initial materialisation and the incremental updates by using specialised
algorithms for rules that produce large cliques; hence, we shall identify
common classes of `hard' rules and then develop such specialised algorithms.

\section*{Acknowledgments}

This work was funded by the EPSRC projects MaSI$^3$, Score!, and DBOnto, and
the FP7 project Optique.

\bibliographystyle{named}
\bibliography{references}

\ifdraft{
    \clearpage
    \appendix
    \onecolumn
    \appendix
\section{Proof of Theorem \ref{thm:correctness}}

Let $\Pi$ be a program (that ascribes no special meaning to $\sameAs$), and let
$E$ be a dataset. A \emph{derivation tree} for a fact $F$ from $E$ w.r.t.\
$\Pi$ is a finite tree $T$ in which each node $t$ is labelled with a fact
$\dtF{t}$, and each nonleaf node $t$ is labelled with a rule ${\dtR{t} \in
\Pi}$ and a substitution $\dtS{t}$ such that the following holds:
\begin{enumerate}
    \item[D1.]\label{dt:root} ${\dtF{\epsilon} = F}$ holds for the root
    $\epsilon$ of $T$;

    \item[D2.]\label{dt:leaf} ${\dtF{t} \in E}$ holds for each leaf node $t$ of
    $T$; and

    \item[D3.]\label{dt:nonleaf} ${\head{\dtR{t}}\dtS{t} = \dtF{t}}$ and
    ${\body{\dtR{t}}\dtS{t} = \{ \dtF{t_1}, \dots, \dtF{t_n} \}}$ hold for each
    nonleaf node $t$ of $T$ with children ${t_1, \dots, t_n}$.
\end{enumerate}
The \emph{materialisation} $\Pi^\infty(E)$ of $E$ w.r.t.\ $\Pi$ is the smallest
set containing each fact that has a derivation tree from $E$ w.r.t.\ $\Pi$;
this definition of $\Pi^\infty(E)$ is equivalent to the one in
Section~\ref{sec:preliminaries}. The \emph{height} of a derivation tree is the
length of its longest branch; moreover, the \emph{height} of a fact ${F \in
\Pi^\infty(E)}$ w.r.t.\ $E$ and $\Pi$ is the minimum height of a derivation
tree for $F$ from $E$ w.r.t.\ $\Pi$.

In the rest of this paper, we make the following assumption ($\ast$): no
derivation tree contains a node $t$ where $\dtR{t}$ is \eqref{eq:eq-repl1} and
${\dtS{t}(x_1) = \dtS{t}(x_1')}$, or $\dtR{t}$ is \eqref{eq:eq-repl2} and
${\dtS{t}(x_2) = \dtS{t}(x_2')}$, or $\dtR{t}$ is \eqref{eq:eq-repl3} and
${\dtS{t}(x_3) = \dtS{t}(x_3')}$ . This is w.l.o.g.\ because, for each such
$t$, we have ${\dtF{t} = \dtF{t_1}}$ for $t_1$ the first child of $t$; hence,
we can always remove such $t$ from the derivation tree.

\smallskip

Next, we recapitulate Theorem \ref{thm:correctness} and present its proof,
which we split into several claims.

\thmcorrectness*

In the rest of this section, we fix a datalog program $\Pi$ and datasets $E$
and $E^-$. Let $(\pi,I)$ be the r-materialisation of $E$ w.r.t.\ $\Pi$; let ${J
\defeq (\Pi \cup \PsameAs)^\infty(E)}$; let ${E' \defeq E \setminus E^-}$; let
$(\pi',I')$ be the r-materialisation of $E'$ w.r.t.\ $\Pi$; and let ${J' \defeq
(\Pi \cup \PsameAs)^\infty(E')}$. By the monotonicity of datalog, we clearly
have ${J' \subseteq J}$.

We next show that Algorithm~\ref{alg:saturate} essentially captures the
r-materialisation algorithm by \citeA{mnph15owl-sameAs-rewriting}.

\begin{claim}\label{claim:saturate}
    Let $P$ and $\merged{P}$ be as obtained after a call to
    Algorithm~\ref{alg:saturate} in line~\ref{alg:checkProvability:saturate},
    let ${K \defeq \{ d \sameAs d \mid d \in \voc{E} \}}$, and let $L$ be the
    set containing precisely each fact $F$ that has a derivation $T$ from ${K
    \cup E'}$ w.r.t.\ ${\Pi \cup \PsameAs}$ in which ${\dtF{t} \in C^\pi}$
    holds for each node $t$ of $T$. Then, the following properties hold:
    \begin{enumerate}
        \item ${\gamma(c) = \min \eqclass{L}{c}}$
        for each constant $c$;

        \item ${P \setminus \merged{P} = \gamma(L)}$; and

        \item each combination of a rule $r$ and a substitution $\tau$ is
        considered at most once in line~\ref{alg:saturate:F:derive} or
        line~\ref{alg:rewrite:prove}, but not both.
    \end{enumerate}
\end{claim}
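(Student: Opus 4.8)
\textbf{Overall approach.}
The plan is to prove the three properties simultaneously by tracking a loop invariant that relates the algorithm's data structures $(\gamma, P, \merged{P})$ to the semantic object $L$. Since $L$ is defined via derivation trees from $K \cup E'$ in which every node carries a fact in $C^\pi$, the natural inductive parameter is the height of such trees. I would therefore proceed by establishing an invariant $\mathcal{I}$ asserting, at the top of each iteration of the loops in Algorithm~\ref{alg:saturate}, that (a) $\gamma(c) = \min \eqclass{L_h}{c}$ for the part $L_h$ of $L$ derivable at heights already processed, and (b) $P \setminus \merged{P}$ equals $\gamma(L_h)$ restricted to facts whose representatives have been placed in $P$. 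The key conceptual point is that $\gamma$ plays exactly the role that $\pi$ plays in the definition of rewriting, but \emph{dynamically}: each call to $\mathsf{rewrite}(a,b)$ in line~\ref{alg:saturate:RewC} updates $\gamma$ by merging one representative into another, mirroring how new equalities in $L$ shrink the equivalence classes' representatives, so I must argue that $\mathsf{rewrite}$ maintains $\gamma(c) = \min \eqclass{L}{c}$ as new equalities are derived.

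\textbf{Key steps, in order.}
First I would appeal to the soundness and completeness of the underlying r-materialisation algorithm of \citeA{mnph15owl-sameAs-rewriting}: Algorithm~\ref{alg:saturate} is an adaptation of that algorithm restricted to derivations lying inside $C^\pi$, so I would isolate precisely how the restriction to $C^\pi$ interacts with forward chaining. The set $K = \{ d \sameAs d \mid d \in \voc{E}\}$ supplies the reflexivity facts that the optimisation in lines~\ref{alg:saturate:C:if-ref}--\ref{alg:saturate:C:if-ref:derive} injects directly rather than deriving via \eqref{eq:eq-ref}; I would verify that these injected facts are exactly the height-$1$ reflexivity facts of $L$ that involve constants surviving in $C^\pi$, so that no derivation is lost and none is spuriously added. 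Second, for the forward-chaining core (lines~\ref{alg:saturate:F}--\ref{alg:saturate:F:derive}), I would show that every fact placed in $P$ via $\mathsf{prove}$ corresponds under $\gamma$ to a genuine member of $L$ (soundness), and conversely that every member of $L$ eventually has its $\gamma$-image in $P \setminus \merged{P}$ (completeness), handling the three cases of line~\ref{alg:saturate:RewF} (outdated facts moved to $\merged{P}$), line~\ref{alg:saturate:RewC} (new equalities triggering $\mathsf{rewrite}$), and the default case (rule application plus reflexivity). The delayed-facts set $Y$ and line~\ref{alg:saturate:C:expand-F-derive} must be shown to recover exactly those facts derived earlier outside $C^\pi$ but later brought inside by an enlarged $C$. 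Third, for property~3, I would argue that the combination of the processed-subset set $V$ (used in the annotated evaluation $\evaluate{[V \setminus \merged{P}]}{Q}{\{G\}}{\sigma}$ of line~\ref{alg:saturate:P:rules:2}) and the $\neq$-annotations from~\eqref{eq:seminaive-annotated-query} forces each rule/substitution pair to be generated exactly once, exactly as in the semina{\"i}ve argument of \citeA{abiteboul95foundation}; I would also check the disjointness with line~\ref{alg:rewrite:prove}, since after a $\mathsf{rewrite}$ the normalised rule $r' = \gamma(r)$ is re-evaluated over $V \setminus \merged{P}$ and I must ensure the pairs produced there were never produced in line~\ref{alg:saturate:F:derive} and vice versa.

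\textbf{Main obstacle.}
The hard part will be the interaction between $\mathsf{rewrite}$ and the non-repetition guarantee of property~3. When $\mathsf{rewrite}(a,b)$ fires, it retroactively normalises already-processed facts (line~\ref{alg:rewrite:normF}) and already-processed rules (line~\ref{alg:rewrite:normR}), then re-evaluates the normalised rule bodies over $V \setminus \merged{P}$ (lines~\ref{alg:rewrite:reapply}--\ref{alg:rewrite:prove}). I must show that this re-evaluation neither loses a required derivation (because the pre-rewrite version of the pair is now `outdated' and moved into $\merged{P}$) nor duplicates a pair already counted, all while $\gamma$ is changing underneath the invariant. Establishing that $V \setminus \merged{P}$ is precisely the set of $\gamma$-normal, up-to-date facts at every moment—so that matching against it is both complete and irredundant even across rewrite calls—is the delicate bookkeeping that the whole claim hinges on, and I expect most of the proof's length to be spent verifying that the invariant survives a $\mathsf{rewrite}$ step.
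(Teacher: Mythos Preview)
Your proposal is correct and, in substance, reconstructs from first principles what the paper simply defers to prior work. The paper's own proof of this claim is only a sketch: it observes that Algorithm~\ref{alg:saturate} is a variant of the r-materialisation algorithm of \citeA{mnph15owl-sameAs-rewriting}, asserts that properties 1--3 follow by a ``straightforward modification'' of that algorithm's correctness proof, and then enumerates the three deltas (the role of $C^\pi \cap (K \cup E')$ as the explicit-fact base, the use of $V$ in place of the original annotated-query trick, and the $Y$/delayed-fact mechanism). Your plan---a height-indexed loop invariant relating $(\gamma, P, \merged{P})$ to $L$, a soundness/completeness split for the forward-chaining core, and a semina{\"i}ve-style argument for non-repetition across $\mathsf{rewrite}$ calls---is exactly the shape the cited proof takes, so you are effectively writing out what the paper elides. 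The trade-off is clear: your route is self-contained and would stand on its own, at the cost of redoing substantial bookkeeping (you correctly flag the $\mathsf{rewrite}$/non-repetition interaction as the bulk of the work); the paper's route is short but opaque to a reader without access to the cited correctness proof. One small correction: the reflexivity facts injected by lines~\ref{alg:saturate:C:if-ref}--\ref{alg:saturate:C:if-ref:derive} are height-$0$ (leaf) facts of $L$, since $K$ is part of the base set $K \cup E'$, not height-$1$ consequences.
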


\begin{proof}[Proof (Sketch)]
Algorithm~\ref{alg:saturate} is a variant of the r-materialisation algorithm by
\citeA{mnph15owl-sameAs-rewriting}, so properties 1--3 hold by a
straightforward modification of the correctness proof of that algorithm. This
proof is quite lengthy so, for the sake of brevity, we just summarise the
differences.
\begin{itemize}
    \item Lines \ref{alg:saturate:C:if-ref}--\ref{alg:saturate:C:if-ref:derive}
    ensure ${\gamma(C^\pi \cap K) \subseteq P \setminus \merged{P}}$, and line
    \ref{alg:saturate:C:expand-F-derive} ensures ${\gamma(C^\pi \cap E')
    \subseteq P \setminus \merged{P}}$; hence, ${C^\pi \cap (K \cup E')}$ plays
    the same role that explicit facts play in the algorithm by
    \citeA{mnph15owl-sameAs-rewriting}.

    \item Let $F$ be an arbitrary fact considered in line~\ref{alg:saturate:F}.
    To ensure property 4 of Claim~\ref{claim:saturate}, the algorithm by
    \citeA{mnph15owl-sameAs-rewriting} uses slightly different annotated
    queries to apply the rules in lines
    \ref{alg:saturate:P:rules:1}--\ref{alg:saturate:P:rules:2} only to facts
    extracted before $F$. In contrast, Algorithm~\ref{alg:prove} keeps track of
    previously processed facts in set $V$, but this has exactly the same effect.

    \item All derivations of a fact in line \ref{alg:saturate:P:ref},
    \ref{alg:saturate:F:derive}, or \ref{alg:rewrite:prove}, are handled by
    Algorithm~\ref{alg:prove}, which, for each $F$, checks whether ${\pi(F) \in
    C}$; this is equivalent to checking ${F \in C^\pi}$. If the latter holds,
    then $F$ is added to $P$, and otherwise $F$ is added to $Y$. If in a
    subsequent invocation of Algorithm~\ref{alg:saturate} set $C$ is extended
    such that ${\pi(F) \in C}$ suddenly holds, then $\gamma(F)$ is added to $P$
    in line~\ref{alg:saturate:C:expand-F-derive}. This, however, does not
    change the algorithm in any substantial way. \qedhere
\end{itemize}
\end{proof}

The following claim follows immediately from the definitions in
Algorithm~\ref{alg:auxiliary-functions}.

\begin{claim}\label{claim:allDisProved}
    The following properties hold for an arbitrary fact $F$ normal w.r.t.\
    $\pi$:
    \begin{enumerate}
        \item ${\mathsf{allProved}(F) = \true}$ if and only if ${F \not\in }S$
        and ${F^\pi \subseteq (P \setminus \merged{P})^\gamma}$; and

        \item ${\mathsf{allDisproved}(F) = \true}$ if and only if ${F^\pi \cap
        (P \setminus \merged{P})^\gamma = \emptyset}$.
    \end{enumerate}
\end{claim}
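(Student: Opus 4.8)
The plan is to unfold the definitions of $\mathsf{allProved}$ and $\mathsf{allDisproved}$ given in Algorithm~\ref{alg:auxiliary-functions} and to reconcile the direction in which the mapping $\gamma$ is applied: the definitions are phrased in terms of the image $\gamma(F^\pi)$ of the set $F^\pi$ under $\gamma$, whereas the claim is phrased in terms of the $\gamma$-expansion $(P \setminus \merged{P})^\gamma$. The single bridging fact is the definitional equivalence that, for any fact $G$ and any dataset $U$, we have $G \in U^\gamma$ if and only if $\gamma(G) \in U$; this is immediate from the definition of the expansion $U^\pi$ in Section~\ref{sec:preliminaries}, read with $\gamma$ in place of $\pi$ and using $\gamma(G) = \triple{\gamma(s)}{\gamma(p)}{\gamma(o)}$ for $G = \triple{s}{p}{o}$.

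First I would lift this pointwise equivalence to the two set-level relations that occur in the claim. For the subset relation, both $\gamma(F^\pi) \subseteq U$ and $F^\pi \subseteq U^\gamma$ unfold to the single statement ``$\gamma(G) \in U$ for every $G \in F^\pi$'', and are therefore equivalent. For the disjointness relation, both $\gamma(F^\pi) \cap U = \emptyset$ and $F^\pi \cap U^\gamma = \emptyset$ unfold to ``$\gamma(G) \not\in U$ for every $G \in F^\pi$'', and are likewise equivalent. These equivalences hold for any $F$; normality of $F$ w.r.t.\ $\pi$ is not actually needed for the argument, although it does guarantee $F \in F^\pi$, so that $F^\pi$ is the intended nonempty set of facts represented by $F$.

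Instantiating $U \defeq P \setminus \merged{P}$ then yields both parts directly. For part~1, $\mathsf{allProved}(F) = \true$ holds, by definition, exactly when $F \not\in S$ and $\gamma(F^\pi) \subseteq P \setminus \merged{P}$, which by the subset equivalence is exactly $F \not\in S$ and $F^\pi \subseteq (P \setminus \merged{P})^\gamma$. For part~2, $\mathsf{allDisproved}(F) = \true$ holds, by definition, exactly when $\gamma(F^\pi) \cap (P \setminus \merged{P}) = \emptyset$, which by the disjointness equivalence is exactly $F^\pi \cap (P \setminus \merged{P})^\gamma = \emptyset$. I expect no genuine obstacle here: the whole content is the translation between applying $\gamma$ to $F^\pi$ and expanding $P \setminus \merged{P}$ by $\gamma$, so the only point demanding care is to unfold the expansion operator $(\cdot)^\gamma$ correctly and to verify that the quantified forms of the two sides genuinely coincide.
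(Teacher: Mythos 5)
Your proposal is correct and takes essentially the same approach as the paper, which disposes of this claim in one line (``follows immediately from the definitions in Algorithm~\ref{alg:auxiliary-functions}''); your write-up merely spells out that unfolding, with the pointwise equivalence ${G \in U^\gamma}$ iff ${\gamma(G) \in U}$ as the bridge between ${\gamma(F^\pi) \subseteq P \setminus \merged{P}}$ (resp.\ ${\gamma(F^\pi) \cap (P \setminus \merged{P}) = \emptyset}$) and the forms stated in the claim. Your side observation that normality of $F$ w.r.t.\ $\pi$ is not needed for the equivalence itself (it only guarantees ${F \in F^\pi}$) is also accurate.
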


We next show that sets $C$, $P$, $\merged{P}$, $S$, and $\gamma$ always satisfy
an important property.

\begin{claim}\label{claim:checkProvability}
    Assume that Algorithm~\ref{alg:checkProvability} is applied to some fact
    $F$, mapping $\gamma$, and sets $S$, $C$, $P$, and $\merged{P}$ where $S$
    is normal w.r.t.\ $\pi$ and ${S^\pi \cap J' = \emptyset}$, and assume that
    all of these satisfy the following property:
    \begin{quote}
        ($\lozenge$) for each ${G \in C}$, either ${G^\pi \subseteq (P
        \setminus \merged{P})^\gamma}$ or, for each fact ${H \in G^\pi}$, each
        derivation tree $T$ for $H$ from $E'$ w.r.t.\ ${\Pi \cup \PsameAs}$,
        and each child $t_i$ of the root of $T$, we have ${\pi(F_{t_i}) \in C}$.
    \end{quote}
    Then, property ($\lozenge$) remains preserved after the invocation of
    Algorithm~\ref{alg:checkProvability}.
\end{claim}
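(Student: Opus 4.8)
The plan is to prove Claim~\ref{claim:checkProvability} by structural/recursive induction on the call tree of $\mathsf{checkProvability}$, establishing that the invariant $(\lozenge)$ is maintained across each recursive invocation. The overall strategy mirrors the standard correctness argument for backward-chaining provability checks: I want to show that whenever we add a fact $F$ to $C$ (line~\ref{alg:checkProvability:add-F}), we also eventually enqueue, via recursive calls, enough facts into $C$ to account for all the immediate predecessors of every derivation of every member of $F^\pi$ --- unless $F$ becomes fully proved first, in which case the first disjunct of $(\lozenge)$ is discharged.

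First I would handle the trivial case: if the guard in line~\ref{alg:checkProvability:add-F} fails, then $F \in C$ already, no state changes, and $(\lozenge)$ is preserved by assumption. Otherwise $F$ is freshly added to $C$, and I must verify $(\lozenge)$ for this new $G = F$ (the previously present facts in $C$ are handled by combining the inductive hypotheses of the recursive subcalls with monotonicity of $P$, $\merged{P}$, $C$, and $\gamma$ across calls). The call to $\mathsf{saturate}()$ in line~\ref{alg:checkProvability:saturate} updates $(P, \merged{P}, \gamma)$ so that, by Claim~\ref{claim:saturate}, $(P \setminus \merged{P})^\gamma$ correctly represents all facts in $L$ derivable within $C^\pi$. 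If $\mathsf{allProved}(F)$ holds (line~\ref{alg:checkProvability:allProved}), then by Claim~\ref{claim:allDisProved} we have $F^\pi \subseteq (P \setminus \merged{P})^\gamma$, so the first disjunct of $(\lozenge)$ holds for $F$ and we return. The substance of the argument is the case where $F$ is not yet fully proved: here I must show that every child $t_i$ of the root of every derivation tree $T$ for every $H \in F^\pi$ has $\pi(F_{t_i}) \in C$ by the end of the call.

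The key step is a case analysis on the last rule applied at the root of $T$ --- that is, on $\dtR{\epsilon}$. If $\dtR{\epsilon}$ is a reflexivity rule \eqref{eq:eq-ref}, then $H = c \sameAs c$ and its single child is some triple mentioning $c$; lines~\ref{alg:checkProvability:ref}--\ref{alg:checkProvability:ref:allProved} iterate over exactly the facts $G \in I \setminus S$ with $c \in \voc{G}$ and recursively call $\mathsf{checkProvability}(G)$, so each such predecessor's representative lands in $C$ (the exclusion of $S$ is sound because $S^\pi \cap J' = \emptyset$, so no fact in $S$ can be the child of a genuine derivation from $E'$). If $\dtR{\epsilon}$ is a replacement rule \eqref{eq:eq-repl1}--\eqref{eq:eq-repl3}, then $H$ arises by rewriting a constant $c$ occurring in $F$, and the children include $c \sameAs c$ together with the source triple; lines~\ref{alg:checkProvability:repl}--\ref{alg:checkProvability:repl:allProved} recursively check $c \sameAs c$ for each $c \in \voc{F}$ with $|c^\pi| > 1$, while the singleton case $|c^\pi| = 1$ is handled because the source triple already lies in $F^\pi$ up to $\pi$. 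If $\dtR{\epsilon} \in \Pi$, then lines~\ref{alg:checkProvability:rules:1}--\ref{alg:checkProvability:rules:allProved} enumerate, via $\matchHead{\pi(\Pi)}{F}$ and $\mathsf{evaluate}$, precisely the body-atom instances $\body{r}\tau$ that can derive $F$, and recurse on each, so every child's representative enters $C$. In every branch the early-exit tests on $\mathsf{allProved}(F)$ are sound: if $F$ becomes proved mid-loop we switch to the first disjunct of $(\lozenge)$, so abandoning the remaining recursion cannot violate the invariant.

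The main obstacle will be pinning down that the enumeration in each branch is genuinely \emph{complete} relative to derivation trees over the full axiomatised semantics $\Pi \cup \PsameAs$, even though the algorithm works with the rewritten program $\pi(\Pi)$ and evaluates matches in $I$ rather than in $J'$. Concretely, I must bridge between derivation trees for $H \in F^\pi$ (stated in terms of $\Pi \cup \PsameAs$ and $E'$) and the rule matches actually computed over $\pi(\Pi)$ and $I$; this requires invoking the relationship $I^\pi = J$ (and the corresponding facts about representatives) together with assumption $(\ast)$, which rules out degenerate replacement-rule applications where $\dtS{t}(x_i) = \dtS{t}(x_i')$ and thus guarantees that each child of a nonleaf node carries strictly less ``rewriting work,'' keeping the case analysis well-founded. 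Care is also needed to confirm that the soundness precondition on $S$ ($S$ normal w.r.t.\ $\pi$ and $S^\pi \cap J' = \emptyset$) continues to justify excluding $S$ from backward chaining in lines~\ref{alg:checkProvability:ref:iterate-G} and~\ref{alg:checkProvability:rules:2}, since a child appearing in a real derivation from $E'$ must lie in $J'$ and hence outside $S^\pi$.
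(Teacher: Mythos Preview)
Your proposal is essentially the paper's own proof: induction on the recursion depth of $\mathsf{checkProvability}$, with the trivial return in line~\ref{alg:checkProvability:add-F} as base case, the $\mathsf{allProved}$ exits discharging the first disjunct of $(\lozenge)$ via Claim~\ref{claim:allDisProved}, and the same three-way case split on $\dtR{\epsilon}$, bridged to the algorithm's matches over $\pi(\Pi)$ and $I\setminus S$ by the identity $\pi(\dtR{\epsilon}\dtS{\epsilon})=\pi(\dtR{\epsilon})\pi(\dtS{\epsilon})$ together with the observation that each $F_i=\pi(\dtF{t_i})$ lies in $I$ (because $\dtF{t_i}\in J'\subseteq J$) and outside $S$ (because $S^\pi\cap J'=\emptyset$ and $S$ is normal w.r.t.\ $\pi$).

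One small correction in the replacement case: the branch $|c^\pi|=1$ is not ``handled because the source triple already lies in $F^\pi$''; it is \emph{vacuous}, since assumption~$(\ast)$ forces the equality child to be $s\sameAs t$ with $s\neq t$, and $\pi(s)=\pi(t)=c$ then gives $|c^\pi|>1$. The source-triple observation is instead what places the \emph{first} child's representative---which is $F$ itself---into $C$ (via line~\ref{alg:checkProvability:add-F}), independently of $|c^\pi|$.
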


\begin{proof}
The proof is by induction on recursion depth of Algorithm
\ref{alg:checkProvability} at which a fact is added to $C$. For the induction
base, ($\lozenge$) remains preserved if the algorithm returns in line
\ref{alg:checkProvability:add-F}.

For the induction step, assume that ($\lozenge$) holds for each fact ${G \in
C}$ different from $F$ after a recursive call in line
\ref{alg:checkProvability:ref:recurse},
\ref{alg:checkProvability:repl:recurse}, or
\ref{alg:checkProvability:rules:recurse}. If the algorithm returns in
line~\ref{alg:checkProvability:allProved},
\ref{alg:checkProvability:ref:allProved},
\ref{alg:checkProvability:repl:allProved},
or~\ref{alg:checkProvability:rules:allProved}, then property 1 of
Claim~\ref{claim:allDisProved} implies ${F^\pi \subseteq (P \setminus
\merged{P})^\gamma}$, so property ($\lozenge$) remains preserved. Otherwise,
consider an arbitrary fact ${H \in F^\pi}$ and an arbitrary derivation tree $T$
for $H$ from $E'$ w.r.t.\ ${\Pi \cup \PsameAs}$. Let ${t_1, \dots, t_n}$ be the
children (if any exist) of the root $\epsilon$ of $T$; since $J$ contains each
fact labelling a node of $T$, we have ${\{ \dtF{t_i}, \dots, \dtF{t_i} \}
\subseteq J' \subseteq J}$. Now let ${F_i = \pi(\dtF{t_i})}$; by the definition
of r-materialisation, we have ${\{ F_1, \dots, F_n \} \subseteq I}$. Moreover,
for each ${1 \leq i \leq n}$, we have ${F_i \in J'}$ and ${S^\pi \cap J' =
\emptyset}$, which imply ${F_i \not\in S^\pi}$; moreover, $S$ is normal w.r.t.\
$\pi$, so ${F_i \not\in S}$ as well. Finally, we clearly have
${\pi(\dtR{\epsilon}\dtS{\epsilon}) = \pi(\dtR{\epsilon})\pi(\dtS{\epsilon})}$,
and so ${\head{\pi(\dtR{\epsilon})}\pi(\dtS{\epsilon}) = F}$ and
${\body{\pi(\dtR{\epsilon})}\pi(\dtS{\epsilon}) = \{ F_1, \dots, F_n \}
\subseteq I \setminus S}$. We next consider the forms of $\dtR{\epsilon}$.
\begin{itemize}
    \item Assume $\dtR{\epsilon}$ is of the form \eqref{eq:eq-ref}, so ${n =
    1}$. Fact ${F_1}$ is eventually considered in
    line~\ref{alg:checkProvability:ref:iterate-G}, so, due to the recursive
    call in line~\ref{alg:checkProvability:ref:recurse}, we have ${F_1 \in C}$,
    as required.

    \item Assume $\dtR{\epsilon}$ is of the form
    \eqref{eq:eq-repl1}--\eqref{eq:eq-repl3}; thus, ${n = 2}$, ${F_1 = F}$, and
    ${F_2 = c \sameAs c}$ for some constant $c$. Fact ${F_1 = F}$ is added to
    $C$ in line~\ref{alg:checkProvability:add-F}. Moreover, by assumption
    ($\ast$) on the shape of $T$, we have ${F_2 = s \sameAs t}$ with ${s \neq
    t}$; since ${\pi(s) = \pi(t) = c}$, we have ${|c^\pi| >1}$. Thus, due to
    the recursive call in line~\ref{alg:checkProvability:repl:recurse}, we have
    ${F_2 \in C}$, as required.

    \item Assume ${\dtR{\epsilon} \in \Pi}$. Then, ${\pi(\dtR{\epsilon}) \in
    \pi(\Pi)}$, so $\pi(\dtR{\epsilon})$ and $\pi(\dtS{\epsilon})$ are
    eventually considered in lines~\ref{alg:checkProvability:rules:1}
    and~\ref{alg:checkProvability:rules:2}; hence, due to the recursive call in
    line~\ref{alg:checkProvability:rules:recurse}, we have ${F_i \in C}$ for
    each ${1 \leq i \leq n}$, as required. \qedhere
\end{itemize}
\end{proof}

Calls in line \ref{alg:BF-sameAs:checkProvability} ensure another property on
$C$, $P$, $\merged{P}$, and $S$.

\begin{claim}\label{claim:proved-disproved}
    The following properties hold after each line of Algorithm
    \ref{alg:BF-sameAs}:
    \begin{enumerate}
        \item property ($\lozenge$) is satisfied;

        \item ${(P \setminus \merged{P})^\gamma = C^\pi \cap J'}$;

        \item ${\gamma(c) = \min \eqclass{C^\pi \cap J'}{c}}$ for each constant
        $c$; and

        \item ${S^\pi \cap J' = \emptyset}$.

        \item For each fact ${F \in O}$, we have ${F^\pi \not\subseteq J'}$.

        \item ${D \subseteq C}$.
    \end{enumerate}
\end{claim}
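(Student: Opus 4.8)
The plan is to prove properties~1--6 simultaneously, by induction over the execution of Algorithm~\ref{alg:BF-sameAs}, showing that they hold right after the initialisation and are preserved by each line. The base case is immediate: after initialisation $C,D,P,\merged{P},S,O$ are empty and $\gamma$ is the identity, so all six properties hold trivially. For the inductive step I would note that $\pi$, $E'$, and $J'$ are fixed throughout (only $\gamma$, not $\pi$, is updated before $\mathsf{propagateChanges}()$), so $F^\pi$ and $C^\pi$ change only when $C$ grows; hence it suffices to inspect the lines that modify one of $C,P,\merged{P},\gamma,S,D,O$, every other line preserving all properties verbatim.

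The properties are mutually dependent, so I would discharge them in a fixed order at each step. Property~1, i.e.\ ($\lozenge$), is affected only by the call $\mathsf{checkProvability}(F)$ in line~\ref{alg:BF-sameAs:checkProvability}; Claim~\ref{claim:checkProvability} shows this call preserves ($\lozenge$), and its hypotheses hold by the inductive hypothesis---${S^\pi \cap J' = \emptyset}$ is property~4, and $S$ is normal w.r.t.\ $\pi$ because every fact added to $S$ comes from ${C \subseteq I}$ and facts in $I$ are normal. Property~4 in turn follows from property~2 with Claim~\ref{claim:allDisProved}: a fact $G$ enters $S$ in line~\ref{alg:BF-sameAs:disproved} only when ${\mathsf{allDisproved}(G)}$, i.e.\ ${G^\pi \cap (P\setminus\merged{P})^\gamma = \emptyset}$; since ${G \in C}$ gives ${G^\pi \subseteq C^\pi}$, property~2 rewrites this as ${G^\pi \cap J' = \emptyset}$, so ${S^\pi \cap J' = \emptyset}$ is maintained. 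Property~5 is analogous: $F$ enters $O$ in line~\ref{alg:BF-sameAs:rules:processed} only when $\mathsf{allProved}(F)$ fails, and as $F$ was just placed in $C$ by line~\ref{alg:checkProvability:add-F} we have ${F^\pi \subseteq C^\pi}$, so Claim~\ref{claim:allDisProved} with property~2 forces either ${F \in S}$ (whence ${F^\pi \cap J' = \emptyset}$ by property~4, and $F^\pi \neq \emptyset$ since $F$ is normal) or ${F^\pi \not\subseteq J'}$; either way ${F^\pi \not\subseteq J'}$, which is stable as $\pi$ and $J'$ are fixed.

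The main obstacle is property~2, the exact characterisation ${(P\setminus\merged{P})^\gamma = C^\pi \cap J'}$, which is what couples the backward-chaining invariant ($\lozenge$) to the forward-chaining bookkeeping of Claim~\ref{claim:saturate}; property~3 then drops out, since Claim~\ref{claim:saturate} already gives ${\gamma(c) = \min\eqclass{L}{c}}$ and it remains only to identify $L$ with ${C^\pi \cap J'}$. As $\mathsf{saturate}()$ is rerun (line~\ref{alg:checkProvability:saturate}) whenever $C$ is extended, Claim~\ref{claim:saturate} supplies, upon return of $\mathsf{checkProvability}$, the rewriting $(\gamma, P\setminus\merged{P})$ of the set $L$ of facts derivable from ${K \cup E'}$ inside $C^\pi$, so it suffices to show that the $\gamma$-expansion of this rewriting equals ${C^\pi \cap J'}$. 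The inclusion $\subseteq$ is the easy direction: the root of each admissible derivation lies in $C^\pi$, and the whole tree lies in $J'$ because ${K \cup E' \subseteq J'}$---here one must use that by this point the algorithm has already deleted $E^-$ from the variable $E$, so that $K$ mentions only constants surviving in $E'$ and its reflexivity facts are genuinely in $J'$. For $\supseteq$ I would take ${F \in C^\pi \cap J'}$ and induct on the height of a minimal derivation tree for $F$ from $E'$ w.r.t.\ ${\Pi \cup \PsameAs}$: applying ($\lozenge$) to ${\pi(F) \in C}$ either certifies $F$ as already represented in $P\setminus\merged{P}$, or forces ${\pi(\dtF{t_i}) \in C}$ for every child $t_i$ of the root; since each ${\dtF{t_i} \in J'}$ as well, the inductive hypothesis yields child derivations confined to $C^\pi$ that graft onto the root. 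The delicate technical point is to line up the reflexivity facts of lines~\ref{alg:saturate:C:if-ref}--\ref{alg:saturate:C:if-ref:derive} with the $K$-leaves of these trees and to verify that passing to the $\gamma$-rewriting neither adds nor drops facts relative to ${C^\pi \cap J'}$.

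Property~6 is of a different character and I would settle it via the worklist discipline rather than the $J'$-bookkeeping above. Every fact placed into $D$---whether in line~\ref{alg:BF-sameAs:update-E-D:2} or inside the main loop---is eventually returned by $\nnext{D}$ in line~\ref{alg:BF-sameAs:extract-F} and then inserted into $C$ by line~\ref{alg:checkProvability:add-F}, so once the loop has drained $D$ no element of $D$ is missing from $C$. Combined with termination---which I would argue from the facts that $C$ only grows, is contained in the finite set $I$, and that each fact is added to $C$ at most once---this gives ${D \subseteq C}$ by the time $\mathsf{propagateChanges}()$ is reached, which is the point at which property~6 is used.
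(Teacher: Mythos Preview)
Your proposal is correct and follows essentially the same approach as the paper's proof: induction over the main loop, with property~1 handled via Claim~\ref{claim:checkProvability}, property~2 via Claim~\ref{claim:saturate} together with a height-induction using ($\lozenge$) for the nontrivial inclusion, properties~3--5 as corollaries of property~2 and Claim~\ref{claim:allDisProved}, and property~6 via the worklist discipline. You are in fact slightly more careful than the paper on two points it glosses over---that $K$ is taken over $\voc{E'}$ (since $E^-$ has already been removed from the variable $E$) so that ${K \subseteq J'}$, and that property~6 as literally stated holds only once the loop has drained $D$ rather than after every line.
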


\begin{proof}
The proof is by induction on the number of iterations of the loop in
lines~\ref{alg:BF-sameAs:extract-F}--\ref{alg:BF-sameAs:rules:processed}. For
the induction base, we have ${S = C = P = O = \emptyset}$ in line
\ref{alg:BF-sameAs:initialise}, so properties 1--5 clearly hold initially. For
the induction step, assume that all properties hold before
line~\ref{alg:BF-sameAs:checkProvability}. Due to property 4 and
Claim~\ref{claim:checkProvability}, property 1 remains preserved after
line~\ref{alg:BF-sameAs:checkProvability}; hence, we next consider properties
2--6.

\smallskip

(Property 2) Let $K$ and $L$ be as stated in Claim~\ref{claim:saturate}; note
that property 2 of Claim~\ref{claim:saturate} is equivalent to ${(P \setminus
\merged{P})^\gamma = L}$. We first show ${(P \setminus \merged{P})^\gamma
\subseteq C^\pi \cap J'}$. Since ${K \subseteq J'}$, we clearly have ${J' =
(\Pi \cup \PsameAs)^\infty(K \cup E')}$. Moreover, for each ${F \in (P
\setminus \merged{P})^\gamma}$ we have ${F \in L}$, so by the definition of $L$
there exists a derivation tree $T$ for $F$ from ${K \cup E'}$ w.r.t.\ ${P \cup
\PsameAs}$ such that ${\dtF{t} \in C^\pi}$ holds for each node $t$ of $T$; but
then, we clearly have ${F \in C^\pi \cap J'}$. We next prove ${C^\pi \cap J'
\subseteq (P \setminus \merged{P})^\gamma}$ by induction on the height $h$ of a
fact ${F \in C^\pi \cap J'}$ w.r.t.\ $E'$ and ${\Pi \cup \PsameAs}$.
\begin{itemize}
    \item If ${h = 0}$, then ${F \in E'}$; since ${F \in C^\pi}$, by the
    definition of $L$ we have ${F \in L}$; but then, ${F \in (P \setminus
    \merged{P})^\gamma}$ as well.

    \item Assume that the claim holds for each fact in ${C^\pi \cap J'}$ whose
    height w.r.t.\ $E'$ and ${\Pi \cup \PsameAs}$ is at most $h$, and consider
    an arbitrary fact ${F \in C^\pi \cap J'}$ with height $h+1$; let $T$ be the
    corresponding derivation tree for $F$. Moreover, assume that ${F \not\in (P
    \setminus \merged{P})^\gamma}$; then, ${F \in C^\pi}$ implies ${\pi(F) \in
    C}$; hence, property ($\lozenge$) ensures that, for each child $t_i$ of the
    root of $T$, we have ${\pi(\dtF{t_i}) \in C}$, which is equivalent to
    ${\dtF{t_i} \in C^\pi}$. Now the height of each $\dtF{t_i}$ w.r.t.\ $E'$
    and ${\Pi \cup \PsameAs}$ is at most $h$ so, by the induction assumption,
    we have ${\dtF{t_i} \in (P \setminus \merged{P})^\gamma = L}$. The latter
    ensures that, for each $\dtF{t_i}$, there exists a derivation tree $T_i$ in
    which each node is labelled by a fact contained in $C^\pi$. Let $T'$ be the
    derivation tree in which the root $\epsilon$ is labelled with the same
    fact, rule, and substitution as in $T$, and each $T_i$ is a subtree of
    $\epsilon$. Clearly, $T'$ is a derivation tree for $F$ from $E'$ w.r.t.\
    ${\Pi \cup \PsameAs}$ in which each node is labelled by a fact contained in
    $C^\pi$; thus, by the definition of $L$, we have ${F \in L = (P \setminus
    \merged{P})^\gamma}$, as required.
\end{itemize}

\smallskip

(Property 3) This property follows directly from property 1 of
Claim~\ref{claim:saturate} and property 2 of Claim~\ref{claim:proved-disproved}.

\smallskip

(Property 4) Assume that some fact $G$ is added to $S$ in
line~\ref{alg:BF-sameAs:disproved}. Then ${\mathsf{allDisproved}(G) = \true}$,
which by property 2 of Claim~\ref{claim:allDisProved} implies ${G^\pi \cap (P
\setminus \merged{P})^\gamma = \emptyset}$. Property 2 of
Claim~\ref{claim:proved-disproved} holds at this point, so we have ${G^\pi \cap
C^\pi \cap J' = \emptyset}$. Finally,
lines~\ref{alg:BF-sameAs:checkProvability} and~\ref{alg:checkProvability:add-F}
ensure ${G \in C}$, so we have ${G^\pi \subseteq C^\pi}$; thus, ${G^\pi \cap J'
= \emptyset}$, and so adding $G$ to $S$ preserves property 4.

\smallskip

(Property 5) Assume that some fact $F$ is added to $O$ in
line~\ref{alg:BF-sameAs:rules:processed}. Then ${\mathsf{allProved}(F) =
\false}$, which by property 1 of Claim~\ref{claim:allDisProved} implies ${F \in
S}$ or ${F^\pi \not\subseteq (P \setminus \merged{P})^\gamma}$. In the former
case, ${F^\pi \not\subseteq J'}$ holds directly from property 4. In the latter
case, property 2 of Claim~\ref{claim:proved-disproved} holds at this point, so
we have ${F^\pi \not\subseteq C^\pi \cap J'}$; moreover,
lines~\ref{alg:BF-sameAs:checkProvability} and~\ref{alg:checkProvability:add-F}
ensure ${F \in C}$, which implies ${F^\pi \subseteq C^\pi}$; this, in turn,
implies ${F^\pi \not\subseteq J'}$. Consequently, adding $F$ to $O$ preserves
property 5.

\smallskip

(Property 6) Each fact $F$ extracted from $D$ in
line~\ref{alg:BF-sameAs:extract-F} is passed in
line~\ref{alg:BF-sameAs:checkProvability} to
Algorithm~\ref{alg:checkProvability}, which in turn ensures that $F$ is added
to $C$ in line~\ref{alg:checkProvability:add-F}.
\end{proof}

We next show that set $D$ contains each fact that needs to be deleted, and each
fact that contains a constant whose representative changes as a result of the
update.

\begin{claim}\label{claim:D-complete}
    For each fact ${F \in J \setminus J'}$, the following two
    properties hold in line~\ref{alg:BF-sameAs:propagateChanges}:
    \begin{enumerate}
        \item ${\pi(F) \in D}$, and

        \item if ${F = s \sameAs t}$ with ${s \neq t}$, then $D$ contains each
        fact ${G \in I}$ such that ${\pi(s) \in \voc{G}}$ and ${G^\pi
        \not\subseteq J'}$.
    \end{enumerate}
\end{claim}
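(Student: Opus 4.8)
My plan is to prove property~1 by strong induction on the height $h$ of $F$ w.r.t.\ $E$ and ${\Pi \cup \PsameAs}$, and then to read off property~2 from the behaviour of the algorithm on reflexivity facts. Two bookkeeping facts will be used repeatedly: every fact added to $D$ is normal w.r.t.\ $\pi$, and ${O \subseteq D}$ since a fact enters $O$ (line~\ref{alg:BF-sameAs:rules:processed}) only after it has been extracted from $D$; moreover, since Algorithm~\ref{alg:BF-sameAs} terminates and its loop exhausts $D$, every fact that enters $D$ is processed before line~\ref{alg:BF-sameAs:propagateChanges}. I would first record a small observation: if ${G = \pi(G)}$ is processed and ${G^\pi}$ contains some ${H \in J \setminus J'}$, then ${\mathsf{allProved}(G) = \false}$, so $G$ enters the conditional of line~\ref{alg:BF-sameAs:notAllProved} and is added to $O$. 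This holds because ${H \notin J'}$ gives ${H \notin C^\pi \cap J'}$, which by property~2 of Claim~\ref{claim:proved-disproved} equals ${(P \setminus \merged{P})^\gamma}$, so the subset condition of Claim~\ref{claim:allDisProved} fails permanently.

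For the base case ${h = 0}$ we have ${F \in E}$, and ${E' \subseteq J'}$ together with ${F \notin J'}$ forces ${F \in E^-}$, so ${\pi(F)}$ enters $D$ in line~\ref{alg:BF-sameAs:update-E-D:2}. For the step I would fix a minimal-height derivation tree for $F$, with root rule $r$ and substitution $\sigma$ and children labelled ${F_1, \dots, F_n}$; each $F_k$ lies in $J$ with height ${< h}$, and since ${\body{r}\sigma \subseteq J'}$ would give ${F \in J'}$, some child satisfies ${F_i \in J \setminus J'}$, so ${\pi(F_i) \in D}$ by the induction hypothesis. The argument then branches on the shape of $r$. When $r$ is a reflexivity rule~\eqref{eq:eq-ref}, the unique child $F_1$ lies in ${J \setminus J'}$ and contains the constant $c$ with ${F = c \sameAs c}$; processing ${\pi(F_1)}$ (which enters the conditional by the observation) adds ${\pi(c) \sameAs \pi(c) = \pi(F)}$ to $D$ in line~\ref{alg:BF-sameAs:ref}. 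When $r$ is a replacement rule~\eqref{eq:eq-repl1}--\eqref{eq:eq-repl3}, the triple child $F_1$ has ${\pi(F_1) = \pi(F)}$ and the equality child is ${F_2 = a \sameAs a'}$ with ${a \neq a'}$ by assumption~($\ast$); if ${F_1 \notin J'}$ then ${\pi(F) = \pi(F_1) \in D}$, and otherwise ${F_2 \in J \setminus J'}$ yields ${c \sameAs c \in D}$ for ${c = \pi(a) = \pi(a')}$ with ${|c^\pi| > 1}$, whose processing adds every ${G \in I \setminus O}$ with ${c \in \voc{G}}$—in particular $\pi(F)$, which lies in $I$ and contains $c$—to $D$ in line~\ref{alg:BF-sameAs:repl:add-G}; any such $G$ already in $O$ is in $D$ since ${O \subseteq D}$.

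The hard case, and the main obstacle, is ${r \in \Pi}$, where ${\pi(F) = \head{\pi(r)}\pi(\sigma)}$ must be produced by the semina\"ive step of lines~\ref{alg:BF-sameAs:rules:1}--\ref{alg:BF-sameAs:rules:derive}; the difficulty is that its body is evaluated in ${I \setminus O}$ under ${\neq}$-annotations, so I must exhibit a single moment at which the whole body is simultaneously available and the correct atom is matched. The representative body facts ${\pi(F_1), \dots, \pi(F_n)}$ all lie in $I$, and the observation guarantees that $\pi(F_i)$ is added to $O$, so some body fact is processed. The key idea is to take the \emph{first} body fact $G_{j_0}$ to enter $O$: its processing necessarily passes through the conditional (that is the only route into $O$), and at that instant no body fact lies in $O$, because $G_{j_0}$ is added to $O$ only after the rule loop and every other body fact is processed strictly later. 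Matching $G_{j_0}$ to the least body atom of $\pi(r)$ that it instantiates then meets every ${\neq}$-constraint and lets the body evaluation over ${I \setminus O}$ recover the full match ${\tau = \pi(\sigma)}$, so ${\pi(F)}$ is added to $D$ in line~\ref{alg:BF-sameAs:rules:derive}.

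Once property~1 is in place, property~2 is immediate: for ${F = s \sameAs t \in J \setminus J'}$ with ${s \neq t}$, property~1 puts ${c \sameAs c}$ (with ${c = \pi(s) = \pi(t)}$ and ${|c^\pi| > 1}$) into $D$; its processing enters the conditional by the observation, and line~\ref{alg:BF-sameAs:repl:add-G} together with ${O \subseteq D}$ places every ${G \in I}$ with ${c \in \voc{G}}$ into $D$, which in particular covers every such $G$ with ${G^\pi \not\subseteq J'}$.
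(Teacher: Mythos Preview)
Your proposal is correct and follows essentially the same approach as the paper: strong induction on the height of $F$ w.r.t.\ $E$ and $\Pi \cup \PsameAs$, case analysis on the root rule, and---in the hard ${r \in \Pi}$ case---selecting the \emph{first} relevant body representative so that the entire body $\body{\pi(r)}\pi(\sigma)$ lies in $I \setminus O$ when lines~\ref{alg:BF-sameAs:rules:1}--\ref{alg:BF-sameAs:rules:derive} fire. The only minor bookkeeping difference is that where the paper invokes property~5 of Claim~\ref{claim:proved-disproved} to argue the needed fact $G$ is \emph{not} in $O$, you instead use ${O \subseteq D}$ to conclude that if $G$ were in $O$ it would already be in $D$; both devices work, and yours incidentally yields a slightly stronger version of property~2.
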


\begin{proof} Consider an arbitrary fact ${F \in J \setminus J'}$.

(Property 1) We prove the claim by induction on the height $h$ of $F$ w.r.t.\
$E$ and ${\Pi \cup \PsameAs}$; the notion of the height of $F$ is correctly
defined because ${F \in J}$. For the induction base, assume ${h = 0}$; now ${F
\in J}$ implies ${F \in E}$; moreover, ${F \not\in J'}$ implies ${F \not\in
E'}$; thus, ${F \in E^-}$, and so $\pi(F)$ is added to $D$ in lines
\ref{alg:BF-sameAs:update-E-D:1}--\ref{alg:BF-sameAs:update-E-D:2}. For the
induction step, assume that the claim holds for each fact in ${J \setminus J'}$
whose height w.r.t.\ $E$ and ${\Pi \cup \PsameAs}$ is at most $h$, and assume
that the height of $F$ w.r.t.\ $E$ and ${\Pi \cup \PsameAs}$ is $h+1$. Let $T$
be a corresponding derivation tree for $F$ from $E$ w.r.t.\ ${\Pi \cup
\PsameAs}$; let ${t_1, \dots, t_n}$ be the children of the root $\epsilon$ of
$T$; and let ${F_i = \pi(\dtF{t_i})}$ for each ${1 \leq i \leq n}$. Moreover,
let $N$ contain precisely each $F_i$, ${1 \leq i \leq n}$, such that ${F_i \in
D}$ and ${F_i^\pi \not\subseteq J}$. Since ${F \not\in J'}$, some $j$ with ${1
\leq j \leq n}$ exists such that ${\dtF{t_j} \not\in J'}$; moreover, $T$ is a
derivation tree for $F$ from $E$ w.r.t.\ ${\Pi \cup \PsameAs}$, so ${\dtF{t_j}
\in J}$ and the height of $\dtF{t_j}$ is at most $h$; but then, we have
${\pi(\dtF{t_j}) = F_j \in D}$ by the induction hypothesis, and so we also have
${F_j \in N}$---that is, ${N \neq \emptyset}$. Each fact in $D$ is eventually
considered in line~\ref{alg:BF-sameAs:extract-F}; thus, let $F'$ be the fact
from $N$ that is consider first. At that point, we have ${O \cap N =
\emptyset}$ because facts are added to added to $O$ in
line~\ref{alg:BF-sameAs:rules:processed} only after they have been considered;
hence, ${F_i \in I \setminus O}$ holds at this point for each ${1 \leq i \leq
n}$. Furthermore, ${F' \in D \subseteq C}$ implies ${(F')^\pi \subseteq
C^\pi}$; but then, ${(F')^\pi \not\subseteq J'}$ and property 2 of
Claim~\ref{claim:proved-disproved} imply ${(F')^\pi \not\subseteq (P \setminus
\merged{P})^\gamma}$; thus, property 1 of Claim~\ref{claim:allDisProved}
ensures we have ${\mathsf{allProved}(F') = \false}$ and so the check in
line~\ref{alg:BF-sameAs:notAllProved} passes. We next consider the possible
forms of the rule $\dtR{\epsilon}$.
\begin{itemize}
    \item Assume that $\dtR{\epsilon}$ is
    \eqref{eq:eq-repl1}--\eqref{eq:eq-repl3}. Then, we clearly have ${\pi(F) =
    F_1}$; fact $\dtF{t_2}$ is of the form ${\dtF{t_2} = s \sameAs t}$ with ${s
    \neq t}$ and ${c = \pi(s) = \pi(t)}$; and ${c \in \voc{F_1}}$. We have two
    possible ways to choose $F'$. If ${F' = F_1}$, then ${\pi(F) = F_1 = F' \in
    D}$ holds. If ${F' = F_2}$, then ${s \neq t}$ by assumption ($\ast$) on the
    shape of $T$, so ${|c^\pi| > 1}$ and the check in
    line~\ref{alg:BF-sameAs:repl} passes; furthermore, due to ${F_1 \in I
    \setminus O}$, we eventually consider fact ${G = F_1 = \pi(F)}$ in
    line~\ref{alg:BF-sameAs:repl:iterate-G} and add it to $D$ in
    line~\ref{alg:BF-sameAs:repl:add-G}.

    \item Assume that $\dtR{\epsilon}$ is \eqref{eq:eq-ref}. Then, $F$ is of
    the form ${s \sameAs s}$ so ${\pi(F) = c \sameAs c}$ for ${c = \pi(s)}$;
    clearly, we have ${c \in \voc{F'}}$ and ${F' = F_1}$. But then, $\pi(F)$ is
    added to $D$ in line~\ref{alg:BF-sameAs:ref}.

    \item Assume that ${\dtR{\epsilon} \in \Pi}$. We clearly have
    ${\pi(\dtR{\epsilon}\dtS{\epsilon}) =
    \pi(\dtR{\epsilon})\pi(\dtS{\epsilon})}$; therefore, we have ${\pi(F) =
    \pi(\head{\dtR{\epsilon}\dtS{\epsilon}}) =
    \head{\pi(\dtR{\epsilon})}\pi(\dtS{\epsilon})}$ and
    ${\pi(\body{\dtR{\epsilon}\dtS{\epsilon}}) = \{ F_1, \dots, F_n \} =
    \body{\pi(\dtR{\epsilon})}\pi(\dtS{\epsilon}) \subseteq I \setminus O}$.
    Moreover, we clearly have ${\pi(\dtR{\epsilon}) \in \pi(\Pi)}$. Finally,
    let $i$ be the smallest integer with ${1 \leq i \leq n}$ such that ${F_i =
    F'}$, and let $Q$ be annotated query \eqref{eq:seminaive-annotated-query}
    obtained from $\pi(\dtR{\epsilon})$ for that $i$; clearly, the way in which
    we chose $i$ ensures ${F_j \neq F'}$ for each $j$ with ${1 \leq j < i}$.
    All of these observations ensure together that ${\langle
    \pi(\dtR{\epsilon}),Q,\sigma) \rangle \in \matchBody{\pi(\Pi)}{F'}}$ is
    considered in line~\ref{alg:BF-sameAs:rules:1}, and that
    $\pi(\dtS{\epsilon})$ is considered in line~\ref{alg:BF-sameAs:rules:2};
    consequently, $\pi(F)$ is added to $D$ in line
    \ref{alg:BF-sameAs:rules:derive}.
\end{itemize}

(Property 2) Assume that $F$ is of the form ${F = s \sameAs t}$ with ${s \neq
t}$, let ${c = \pi(s) = \pi(t)}$, and let ${F' = \pi(F)}$. Property 1 of this
claim ensures ${F' = c \sameAs c \in D \subseteq C}$, and so we have ${(F')^\pi
\subseteq C^\pi}$; but then, together with ${F \not\in J'}$, property 2 of
Claim~\ref{claim:proved-disproved} ensures ${(F')^\pi \not\subseteq (P
\setminus \merged{P})^\gamma}$; finally, property 1 of
Claim~\ref{claim:allDisProved} ensures ${\mathsf{allProved}(F') = \false}$.
Fact $F'$ is eventually processed in line~\ref{alg:BF-sameAs:extract-F}, and by
the previous discussion the check in line~\ref{alg:BF-sameAs:notAllProved}
passes. Moreover, ${s \neq t}$ implies ${|c^\pi| >1}$, so the check in
line~\ref{alg:BF-sameAs:repl} passes as well. Now consider an arbitrary fact
${G \in I}$ such that ${c \in \voc{G}}$ and ${G^\pi \not\subseteq J'}$;
property 5 of Claim~\ref{claim:proved-disproved} ensures ${G \not\in O}$, and
therefore $G$ is added to $D$ in line~\ref{alg:BF-sameAs:repl:add-G}.
\end{proof}

We next show that Algorithm~\ref{alg:BF-sameAs} correctly updates $I$ to $I'$.

\begin{claim}\label{claim:BF-sameAs:correctness}
    Algorithm~\ref{alg:BF-sameAs} updates set $I$ to $I'$.
\end{claim}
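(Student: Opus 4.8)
The plan is to analyse the only step that modifies $(\pi,I)$, namely the call to $\mathsf{propagateChanges}()$ in line~\ref{alg:BF-sameAs:propagateChanges}: until that point $\pi$ and $I$ are read-only, so they still equal the input r-materialisation of $E$, and all invariants of Claims~\ref{claim:saturate}--\ref{claim:D-complete} hold at line~\ref{alg:BF-sameAs:propagateChanges}. Throughout I write $\pi$ and $I = \pi(J)$ for these input values. I would first show that lines~\ref{alg:propagateChanges:pi-1}--\ref{alg:propagateChanges:pi-2} turn $\pi$ into $\pi'$ (needed because line~\ref{alg:propagateChanges:add-proved} applies the updated mapping), and then that the deletion in line~\ref{alg:propagateChanges:delete-unproved} and the addition in line~\ref{alg:propagateChanges:add-proved} turn $I$ into $I'$. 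The recurring tool is the monotonicity bound: since $J' \subseteq J$ we have $\pi(d) = \min\eqclass{J}{d} \leq \min\eqclass{J'}{d} = \pi'(d)$, while $d \in \eqclass{J'}{d}$ gives $\pi'(d) \leq d$; in particular every $\pi$-normal constant is $\pi'$-normal, and (using property~3 of Claim~\ref{claim:proved-disproved}) also $\gamma$-normal, since then $a = \pi(a) = \pi'(a) \leq \gamma(a) = \min\eqclass{C^\pi \cap J'}{a} \leq a$.

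For the mapping, line~\ref{alg:propagateChanges:pi-2} resets $\pi(d)$ to $\gamma(d)$ exactly when $\pi(d) = c$ for some $c \sameAs c \in C$. When it fires I would prove $\gamma(d) = \pi'(d)$: by property~3 of Claim~\ref{claim:proved-disproved} and $C^\pi \cap J' \subseteq J'$ we get $\gamma(d) = \min\eqclass{C^\pi\cap J'}{d} \geq \pi'(d)$, and for the converse, if $m = \pi'(d) < d$ then $d \sameAs m \in J'$ with $m$ in the same $J$-class as $d$, so $\pi(d \sameAs m) = c \sameAs c \in C$, whence $d \sameAs m \in C^\pi \cap J'$ and $m \in \eqclass{C^\pi\cap J'}{d}$. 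When it does not fire I would prove $\pi(d) = \pi'(d)$: otherwise $\pi(d) < \pi'(d)$ forces $d \sameAs \pi(d) \in J \setminus J'$, and property~1 of Claim~\ref{claim:D-complete} places $c \sameAs c = \pi(d \sameAs \pi(d))$ into $D \subseteq C$ (property~6 of Claim~\ref{claim:proved-disproved}), contradicting that it does not fire. Hence $\pi$ becomes $\pi'$.

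For $I$, write $\mathrm{Del} = D \setminus (P \setminus \merged{P})$ and $\mathrm{Add} = \pi'(P \setminus \merged{P})$, so the new $I$ is $(I \setminus \mathrm{Del}) \cup \mathrm{Add}$, and I would verify four points. (i)~$\mathrm{Add} \subseteq I'$: by Claims~\ref{claim:saturate} and~\ref{claim:proved-disproved} we have $P \setminus \merged{P} = \gamma(C^\pi \cap J')$, and since $e \sameAs \gamma(e) \in J'$ whenever $\gamma(e) \neq e$ we get $\pi' \circ \gamma = \pi'$ on $J'$, so $\mathrm{Add} = \pi'(C^\pi \cap J') \subseteq \pi'(J') = I'$. (ii)~Untouched facts are correct: any $F \in I \setminus D$ lies in $J'$ (else property~1 of Claim~\ref{claim:D-complete} gives $F = \pi(F) \in D$) and is $\pi'$-normal, so $F \in I'$, and it is kept. (iii)~Deletions are exactly the stale facts: a $\pi$-normal $F \in I \cap D$ is $\gamma$-normal and lies in $C^\pi$, so by property~2 of Claim~\ref{claim:proved-disproved}, $F \in P \setminus \merged{P}$ iff $F \in J'$; hence $F$ is deleted iff $F \notin J'$ iff $F \notin I'$. (iv)~New facts are added: for $G \in I' \setminus I$ we have $G \in J'$ but $G$ is not $\pi$-normal, so it contains a constant $a$ with $\pi(a) = c < a = \pi'(a)$; then $a \sameAs c \in J \setminus J'$, property~1 of Claim~\ref{claim:D-complete} puts $c \sameAs c \in D$, and since $a \sameAs c \notin J'$ the test $\mathsf{allProved}(c \sameAs c)$ fails (Claim~\ref{claim:allDisProved}), so $c \sameAs c$ is processed with $|c^\pi| > 1$; line~\ref{alg:BF-sameAs:repl:add-G}, together with $O \subseteq D$, then forces $\pi(G) \in D \subseteq C$, whence $G \in C^\pi \cap J'$ and $G = \pi'(G) \in \mathrm{Add}$. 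Combining (i)--(iv) gives both inclusions, so the new $I$ equals $I'$.

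I expect the main obstacle to be point~(iv). The delicate issue is that Claim~\ref{claim:D-complete}(2) only drags a fact into $D$ when it has a preimage outside $J'$; when the representative $\pi(G)$ of a newly required fact is itself still in $J'$, I cannot invoke it and must instead argue directly from the replacement block (lines~\ref{alg:BF-sameAs:repl}--\ref{alg:BF-sameAs:repl:add-G}) that processing the retracted reflexivity fact $c \sameAs c$ enumerates \emph{every} fact of $I \setminus O$ mentioning $c$, so that $\pi(G)$ enters $C$ irrespective of its $J'$-status. Making this airtight requires showing that $c \sameAs c$ is genuinely reached and that no relevant fact slips through the interaction between $O$, the disproved set $S$, and the delayed set $Y$; this bookkeeping, rather than any single inequality, is where the proof is most intricate.
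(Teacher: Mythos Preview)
Your proposal is correct and follows the paper's overall architecture closely: the two-case analysis for the $\pi$-update (on whether $c \sameAs c \in C$) and the deletion/addition split for the $I$-update match the paper's proof almost line for line, with your ordering inequalities $\pi(d) \leq \pi'(d) \leq \gamma(d) \leq d$ playing the role of the paper's equations~\eqref{eq:pi-c}--\eqref{eq:pi-P-of-gamma} and~\eqref{eq:norm-F}.

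The one substantive divergence is your point~(iv). The paper handles $G \in I' \setminus I$ by appealing to property~2 of Claim~\ref{claim:D-complete}, which requires $\pi(G)^\pi \not\subseteq J'$; as you anticipated, that hypothesis can fail (take $E = \{\triple{a}{R}{x},\triple{b}{R}{x},\,a \sameAs b\}$ with $b < a$ and $E^- = \{a \sameAs b\}$: then $\pi(G)^\pi = \{\triple{a}{R}{x},\triple{b}{R}{x}\} \subseteq J'$). Your direct argument through lines~\ref{alg:BF-sameAs:repl}--\ref{alg:BF-sameAs:repl:add-G} and the observation $O \subseteq D$ avoids this neatly: once $c \sameAs c$ is extracted from $D$ and fails $\mathsf{allProved}$, every fact of $I$ mentioning $c$ is either already in $O \subseteq D$ or is swept into $D$ by line~\ref{alg:BF-sameAs:repl:add-G}, so $\pi(G) \in D \subseteq C$ regardless of whether $\pi(G)^\pi \subseteq J'$. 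That is the whole argument; the concerns you raise about $S$ and $Y$ are unnecessary, since neither set is consulted in lines~\ref{alg:BF-sameAs:repl}--\ref{alg:BF-sameAs:repl:add-G}.
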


\begin{proof}
Property 6 of Claim~\ref{claim:proved-disproved} and property 1 of
Claim~\ref{claim:D-complete} clearly ensure that \eqref{eq:all-checked} holds.
Furthermore, property 2 of Claim~\ref{claim:proved-disproved} clearly ensures
that \eqref{eq:all-contained} holds.
\begin{align}
    J \setminus J'                  & \subseteq D^\pi \subseteq C^\pi \label{eq:all-checked} \\
    (P \setminus \merged{P})^\gamma & \subseteq J' \subseteq J \label{eq:all-contained}
\end{align}
 For convenience we recapitulate the definitions of $\pi(c)$, $\pi'(c)$, and
$\gamma(c)$; note that \eqref{eq:gamma-c} follows immediately from properties 2
and 3 of Claim~\ref{claim:proved-disproved}. Finally, \eqref{eq:all-contained},
\eqref{eq:pi-P-c}, and \eqref{eq:gamma-c} clearly imply
\eqref{eq:pi-P-of-gamma}.
\begin{align}
    \pi(c)                                  & = \min \eqclass{J}{c} \label{eq:pi-c} \\
    \pi'(c)                                 & = \min \eqclass{J'}{c} \label{eq:pi-P-c} \\
    \gamma(c)                               & = \min \eqclass{(P \setminus \merged{P})^\gamma}{c} \label{eq:gamma-c} \\
    \pi'((P \setminus \merged{P})^\gamma)   & = \pi'(P \setminus \merged{P}) \label{eq:pi-P-of-gamma}
\end{align}

Before proceeding, we prove several useful properties. Consider an arbitrary
constant $c$ with ${\pi(c) = c}$; by \eqref{eq:all-contained} and
\eqref{eq:pi-c}--\eqref{eq:gamma-c}, we clearly have ${\pi'(c) = c}$ and
${\gamma(c) = c}$. Thus, for each fact $F$ with ${\pi(F) = F}$, we have
${\pi'(F) = F}$ and ${\gamma(F) = F}$, which ensures the following properties:
\begin{align}
    \begin{array}{r@{}l@{\;}l@{\qquad}r@{}l@{\qquad}r@{}l}
        F \in I & \text{ iff } F \in J,     &               & F \in I'  & \text{ iff } F \in J', & F \in (P \setminus \merged{P})^\gamma   & \text{ iff } F \in P \setminus \merged{P}, \\
        F \in D & \text{ iff } F \in D^\pi, & \text{ and }  & F \in C   & \text{ iff } F \in C^\pi.
    \end{array} \label{eq:norm-F}
\end{align}

\smallskip

We next show that
lines~\ref{alg:propagateChanges:pi-1}--\ref{alg:propagateChanges:pi-2} update
$\pi$ to $\pi'$. To this end, consider arbitrary constants $c$ and $d$ with
${\pi(d) = c}$, and let ${F = c \sameAs c}$. Set $F^\pi$ clearly contains each
triple of the form ${d \sameAs e \in J}$, which, together
with~\eqref{eq:all-contained}, implies
\begin{align}
    \eqclass{F^\pi \cap (P \setminus \merged{P})^\gamma}{d} = \eqclass{(P \setminus \merged{P})^\gamma}{d}, \qquad \eqclass{F^\pi \cap J'}{d} = \eqclass{J'}{d}, \qquad \text{and} \qquad \eqclass{F^\pi \cap J}{d} = \eqclass{J}{d}. \label{eq:classes}
\end{align}
We now consider two possible cases.
\begin{itemize}
    \item Assume that ${F \in C}$. Thus, ${F^\pi \subseteq C^\pi}$ holds, so
    property 2 of Claim~\ref{claim:proved-disproved} ensures ${F^\pi \cap (P
    \setminus \merged{P})^\gamma = F^\pi \cap J' = V}$. But then,
    \eqref{eq:classes} imply ${\eqclass{V}{d} = \eqclass{J'}{d} = \eqclass{(P
    \setminus \merged{P})^\gamma}{d}}$. Finally, \eqref{eq:pi-P-c} and
    \eqref{eq:gamma-c} imply ${\pi'(d) = \gamma(d)}$.

    \item Assume that ${F \not\in C}$. We thus have ${F^\pi \cap C^\pi =
    \emptyset}$; but then, ${J \setminus J' \subseteq C^\pi}$ implies ${F^\pi
    \cap (J \setminus J') = \emptyset}$, which then implies ${F^\pi \cap J =
    F^\pi \cap J'}$. Finally, \eqref{eq:pi-c}, \eqref{eq:pi-P-c}, and
    \eqref{eq:classes} together imply ${\pi'(d) = \pi(d)}$.
\end{itemize}

\smallskip

We next prove ${I \setminus I' = D \setminus (P \setminus \merged{P})}$ and
hence show that line~\ref{alg:propagateChanges:delete-unproved} correctly
deletes the relevant facts. To this end, we next consider each side of the
inclusion.
\begin{itemize}
    \item Assume that ${F \in I \setminus I'}$. Then ${F \in I}$ implies
    ${\pi(F) = F}$, so by \eqref{eq:norm-F} we have ${F \in J \setminus J'}$.
    By \eqref{eq:all-checked} we have ${F \in D^\pi \subseteq C^\pi}$, and by
    \eqref{eq:norm-F} we have ${F \in D \subseteq C}$. Moreover, ${F \not\in
    J'}$ and property 2 of Claim~\ref{claim:proved-disproved} imply ${F \not\in
    (P \setminus \merged{P})^\gamma}$, which by \eqref{eq:norm-F} implies ${F
    \not\in P \setminus \merged{P}}$. Consequently, we have ${F \in D \setminus
    (P \setminus \merged{P})}$.

    \item Assume that ${F \in D \setminus (P \setminus \merged{P})}$. Then ${D
    \subseteq I}$ implies ${F \in I}$, so ${\pi(F) = F}$. Also, ${F \not\in P
    \setminus \merged{P}}$ and \eqref{eq:norm-F} imply ${F \not\in (P \setminus
    \merged{P})^\gamma}$. But then, property 2 of
    Claim~\ref{claim:proved-disproved} ensures ${F \not\in C^\pi \cap J'}$. Due
    to ${D \subseteq C}$ and \eqref{eq:norm-F}, we have ${F \in C^\pi}$; thus,
    ${F \not\in J'}$, so by \eqref{eq:norm-F} we have ${F \not\in I'}$.
    Consequently, we have ${F \in I \setminus I'}$.
\end{itemize}

\smallskip

We finally prove that $I' = [I \setminus (I \setminus I')] \cup \pi'(P
\setminus \merged{P})$ and hence show that
line~\ref{alg:propagateChanges:add-proved} correctly adds the relevant facts;
please remember that, due to updates in
lines~\ref{alg:propagateChanges:pi-1}--\ref{alg:propagateChanges:pi-2}, mapping
$\pi$ actually contains $\pi'$ in line~\ref{alg:propagateChanges:add-proved}.
\begin{itemize}
    \item Assume that ${F \in [I \setminus (I \setminus I')] \cup \pi'(P
    \setminus \merged{P})}$. We consider two cases.
    \begin{itemize}
        \item Assume that ${F \in I \setminus (I \setminus I')}$. Thus, ${F \in
        I}$ and ${F \not\in I \setminus I'}$; but then, we have ${F \in I'}$,
        as required.

        \item Assume that ${F \in \pi'(P \setminus \merged{P})}$. Then, some
        ${G \in (P \setminus \merged{P})^\gamma}$ exists such that ${\pi'(G) =
        F}$. By property 2 of Claim~\ref{claim:proved-disproved}, we have ${G
        \in J'}$; but then, we have ${\pi'(G) = F \in I'}$, as required.
    \end{itemize}

    \item Assume that ${F \in I'}$ and ${F \not\in I \setminus (I \setminus
    I')}$. Thus, ${F \not\in I}$, but clearly ${F \in J' \subseteq J}$. Due to
    the latter, some ${G \in I}$ exists such that ${\pi(F) = G}$; clearly, ${F
    \neq G}$ and ${G^\pi \not\subseteq J}$. Since ${G \in I}$, we have ${\pi(G)
    = G}$; thus, by \eqref{eq:norm-F} we have ${\pi'(G) = G}$. Moreover, ${F
    \in I'}$ implies ${\pi'(F) = F}$. Consequently, distinct constants ${a \in
    \voc{F}}$ and ${b \in \voc{G}}$ exist such that ${a \sameAs b \in J
    \setminus J'}$; but then, property 2 of Claim~\ref{claim:D-complete} and
    ${G^\pi \not\subseteq J}$ ensure that ${G \in D \subseteq C \subseteq
    C^\pi}$, which ensures ${F \in C^\pi}$. Since ${F \in J'}$, by property 2
    of Claim~\ref{claim:proved-disproved} we have ${F \in (P \setminus
    \merged{P})^\gamma}$; but then, by \eqref{eq:pi-P-of-gamma} we have ${F \in
    \pi'(P \setminus \merged{P})}$, as required. \qedhere
\end{itemize}
\end{proof}

We next show that Algorithm~\ref{alg:BF-sameAs} does not repeat derivations.

\begin{claim}\label{claim:B-F-delete:nonrepetition}
    Each combination of a rule $r$ and a substitution $\tau$ is considered at
    most once in line~\ref{alg:BF-sameAs:rules:derive}.
\end{claim}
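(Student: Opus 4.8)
The plan is to run the standard semina\"ive non-repetition argument, adapted so that the ``pivot'' fact is the one extracted from $D$ and the role of the processed set is played by $O$. I would first isolate two structural invariants of Algorithm~\ref{alg:BF-sameAs}. First, the main loop is sequential: each iteration extracts a single fact $F$ at line~\ref{alg:BF-sameAs:extract-F}, runs to completion, and only then extracts the next fact; moreover, by the iterator semantics of $\nnext{D}$, each fact is extracted from $D$ at most once. Second, $O$ grows monotonically, since it is modified only by $\add{O}{F}$ at line~\ref{alg:BF-sameAs:rules:processed}; and this statement, together with the loop at lines~\ref{alg:BF-sameAs:rules:1}--\ref{alg:BF-sameAs:rules:derive}, lies inside the \textbf{if}~not~$\mathsf{allProved}(F)$ block guarded by line~\ref{alg:BF-sameAs:notAllProved}. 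Consequently, whenever $(r,\tau)$ is considered at line~\ref{alg:BF-sameAs:rules:derive} during the iteration that extracted $F$, that same iteration also adds $F$ to $O$ before the next fact is extracted, and $F$ stays in $O$ thereafter.

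Next I would characterise what it means for $(r,\tau)$ to be considered. If this happens during the iteration that extracted $F$, then some tuple $\langle r,Q,\sigma\rangle \in \matchBody{\pi(\Pi)}{F}$ is selected at line~\ref{alg:BF-sameAs:rules:1}, so $B_i\sigma = F$ for the associated pivot index $i$, and $\tau \supseteq \sigma$ is returned by $\evaluate{[I\setminus O]}{Q}{\{F\}}{\sigma}$ at line~\ref{alg:BF-sameAs:rules:2}. Writing $G_j \defeq B_j\tau$, the form~\eqref{eq:seminaive-annotated-query} of $Q$ and the semantics of $\evaluate{\cdot}{\cdot}{\cdot}{\cdot}$ give $G_i = F$ and, at that moment, $G_j \in (I\setminus O)\setminus\{F\}$ for $j<i$ and $G_j \in I\setminus O$ for $j>i$. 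Thus every $G_j \notin O$ at that moment; the pivot $F = G_i$ is necessarily one of the finitely many facts $B_1\tau,\dots,B_n\tau$; and the exclusion set $\{F\}$ forces $G_j \neq F$ for all $j<i$, so $i$ is the \emph{least} index with $G_i = F$. Hence once the pivot $F$ is fixed, the index $i$ is determined by $(r,\tau)$.

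Finally I would derive a contradiction from two distinct considerations of $(r,\tau)$, with pivots $F_1,F_2$ and indices $i_1,i_2$. If $F_1 = F_2$, then $i_1 = i_2$ by the minimality just established, so both considerations occur in the single iteration that extracted this fact; within one iteration $\matchBody{\pi(\Pi)}{\cdot}$ yields each $\langle r,Q,\sigma\rangle$ once and $\evaluate{\cdot}{\cdot}{\cdot}{\cdot}$ returns $\tau$ at most once, while any second body atom matching the same pivot has a larger index and is blocked by its $\neq$-annotation (this is the ``$\tau_2$'' phenomenon of Section~\ref{sec:incremental:optimisations}); so $(r,\tau)$ is considered only once there. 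If $F_1 \neq F_2$, then $i_1 \neq i_2$ (else $F_1 = G_{i_1} = G_{i_2} = F_2$), each fact is extracted exactly once, and the two iterations are sequential; say the one with pivot $F_1$ completes first, adding $F_1$ to $O$. But the other consideration, with pivot $F_2$ and index $i_2 \neq i_1$, requires $G_{i_1} = F_1 \in I\setminus O$, which is impossible since $F_1 \in O$ and $O$ never shrinks. Either way there is at most one consideration.

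The main obstacle I expect is making the two cases watertight rather than the high-level structure: precisely justifying via~\eqref{eq:seminaive-annotated-query} that within a single iteration no second pivot index can yield the same $\tau$, and confirming that every considered pivot is indeed added to $O$ (because the consideration lives in the not-$\mathsf{allProved}$ branch) \emph{before} the next extraction, so that the $I\setminus O$ test genuinely blocks a repeated derivation across iterations.
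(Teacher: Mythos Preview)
Your proposal is correct and follows essentially the same approach as the paper's proof: both split on whether the two pivots coincide, using the $\neq$-annotations of \eqref{eq:seminaive-annotated-query} to rule out a second derivation when the pivot is the same, and the monotone growth of $O$ (via line~\ref{alg:BF-sameAs:rules:processed}) to rule it out when the pivots differ. Your write-up is somewhat more explicit about the structural invariants (each fact extracted once, the pivot always lands in $O$ before the next extraction, minimality of the pivot index), but the underlying argument is identical to the paper's.
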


\begin{proof}
Assume that a rule ${r \in \Pi}$ and substitution $\tau$ exist that are
considered in line~\ref{alg:BF-sameAs:rules:derive} twice, when (not
necessarily distinct) facts $F$ and $F'$ are extracted from $D$. Moreover, let
$B_i$ and $B_{i'}$ be the body atoms of $r$ that $\tau$ matches to $F$ and
$F'$---that is, ${F = B_i\tau}$ and ${F' = B_{i'}\tau}$. Finally, let $Q'$ be
the annotated query considered in line \ref{alg:BF-sameAs:rules:1} when atom
$B_{i'}$ of $r$ is matched to $F'$. We have the following possibilities.
\begin{itemize}
    \item Assume that $F = F'$. Then, $B_i$ and $B_{i'}$ must be distinct, so
    w.l.o.g.\ assume that ${i \leq i'}$. But then, query $Q'$ contains atom
    $B_i^{\neq}$, so $\tau$ cannot be returned in
    line~\ref{alg:BF-sameAs:rules:2} when evaluating $Q'$.

    \item Assume that ${F \neq F'}$ and that, w.l.o.g.\ $F$ is extracted from
    $D$ before $F'$. Then, we have ${F \in O}$ due to
    line~\ref{alg:BF-sameAs:rules:processed}, and therefore we have ${F \not\in
    I \setminus O}$; consequently, $\tau$ cannot be returned in
    line~\ref{alg:BF-sameAs:rules:2} when evaluating $Q'$. \qedhere
\end{itemize}
\end{proof}

}{}

\end{document}